\definecolor{myurlcolor}{rgb}{0,0,0.4}
\definecolor{mycitecolor}{rgb}{0,0.5,0}
\definecolor{myrefcolor}{rgb}{0.5,0,0}
\newtheorem{theorem}{Theorem}
\newtheorem{proposition}{Proposition}
\newtheorem{definition}{Definition}
\newtheorem*{proof*}{Proof}
\newcommand{\be}{\begin{equation}}
\newcommand{\ee}{\end{equation}}
\newcommand{\bea}{\begin{eqnarray}}
\newcommand{\eea}{\end{eqnarray}}
\newcommand{\blue}[1]{\textcolor{blue}{{#1}}}
\title{Schwinger's Picture of Quantum Mechanics IV: Composition and independence}
\author{F. M. Ciaglia$^{1,5}$  \href{https://orcid.org/0000-0002-8987-1181}{\includegraphics[scale=0.7]{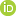}}, F. Di Cosmo$^{2,3,6}$ \href{https://orcid.org/0000-0003-0256-5913}{\includegraphics[scale=0.7]{ORCID.png}}, A. Ibort$^{2,3,7}$\href{https://orcid.org/0000-0002-0580-5858}{\includegraphics[scale=0.7]{ORCID.png}}, G. Marmo$^{4,8}$\href{https://orcid.org/0000-0003-2662-2193}{\includegraphics[scale=0.7]{ORCID.png}}\\
\footnotesize{$^{1}$\textit{ Max Planck Institute for Mathematics in the Sciences, Leipzig, Germany}} \\
\footnotesize{$^{2}$\textit{ ICMAT, Instituto de Ciencias Matem\'{a}ticas (CSIC-UAM-UC3M-UCM)}}  \\
\footnotesize{$^{3}$\textit{ Depto. de Matem\'aticas, Univ. Carlos III de Madrid, Legan\'es, Madrid, Spain}}  \\
\footnotesize{$^{4}$\textit{ Dipartimento di Fisica ``E. Pancini'', Universit\`a di Napoli Federico II, Napoli, Italy}} \\
\footnotesize{$^{5}$\textit{ e-mail: \texttt{florio.m.ciaglia[at]gmail.com}}}, \\
\footnotesize{ $^{6}$\textit{ e-mail: \texttt{fabio[at]}}} \\ 
\footnotesize{ $^{7}$\textit{ e-mail: \texttt{albertoi[at]math.uc3m.es}}} \\ 
\footnotesize{$^{8}$\textit{ e-mail: \texttt{marmo[at]na.infn.it}}}}
\begin{document}

\maketitle

\begin{abstract}
The groupoids description of Schwinger's picture of quantum mechanics is continued by discussing the closely related notions of composition of systems, subsystems, and their independence.  Physical subsystems have a neat algebraic description as subgroupoids of the Schwinger's groupoid of the system.  The groupoids picture offers two natural notions of composition of systems: Direct and free products of groupoids, that will be analyzed in depth as well as their universal character.  Finally, the notion of independence of subsystems will be reviewed, finding that the usual notion of independence, as well as the notion of free independence, find a natural realm in the groupoids formalism.   The ideas described in this paper will be illustrated by using the EPRB experiment.  It will be observed that, in addition to the notion of the non-separability provided by the entangled state of the system, there is an intrinsic `non-separability' associated to the impossibility of identifying the entangled particles as subsystems of the total system.
\end{abstract}

\tableofcontents

\thispagestyle{fancy}

\section{Introduction} 

\subsection{The groupoids picture of Quantum Mechanics}
In the series of papers \cite{Ci18,Ib18a, Ib18b, Ib18c} some relevant traits of a new picture of Quantum Mechanics emerging from Schwinger's algebra of selective measurements \cite{Sc91, Sc01} have been discussed.  

In \cite{Ib18a} the kinematical background of the theory was presented as an abstraction of Schwinger's algebra of selective measurements.  It was shown that the categorical notion of 2-groupoids provides a natural mathematical background for the symbolic language needed to describe microscopic physical processes as sought by Schwinger.  

The notions of states and observables, together with the important notion of amplitudes, were started to be discussed in \cite{Ib18b} as well as a first Hamiltonian approach towards a dynamical setting of quantum theories.  In \cite{Ib18c} the statistical interpretation of quantum mechanical systems was addressed.  It was shown that there is a perfect match between the statistical interpretation of Quantum Mechanics provided by R. Sorkin's quantum measures and decoherence functionals and the notion of states in the algebra of the groupoid describing the theory.   

In the present paper, the analysis of the foundations of quantum mechanics is pursued further by addressing the relation of the theoretical background provided by the groupoids interpretation of J. Schwinger's algebra of selective measurements and the problem of composition of systems, which is of fundamental importance in the interpretation of Quantum Mechanics, as well as the intimately associated notions of subsystem and independence, two other key notions to develop a proper theory of information and thermodynamics of quantum systems.    

%%%%%%%%%%%%%%

\subsection{The problem of composition}

The composition of systems is a fundamental notion in our description of the physical world.    As it was expressed by E. Lieb and J. Yngvason \cite{Li00, Li99}, on its most common/simple/naive acceptation, composition of systems \textit{``corresponds simply to the two (or more) systems being side by side on the laboratory table; mathematically it is just another system called a \textit{compound system}''}.  This operational notion of composition fits well in the groupoid picture as it simply states that whatever will be the meaning of the expression `systems being side by side on the laboratory table', there is a natural interpretation of such situation corresponding to a experimental setting where the outcomes and transitions of both systems can be observed simultaneously and independently of each other.    As it will be discussed later on in this paper, this notion will correspond to the direct product of the corresponding groupoids.

The rightness and physical feasibility of this way of proceeding, that barely deserves a nod in classical physics, becomes acutely relevant in quantum mechanics.  Just to mention two foundational instances: Von Neumann introduced the notion of composition of system to provide a consistent description of the measuring process \cite[Chap. VI, Sects. 1,2, pag. 224 ff.]{Ne32}.  Later on, in the epoch-making analysis of the completeness of quantum mechanics, Einstein, Podolski and Rosen proposed and experiment where the role of composition of subsystems is critical \cite{Ei35}.

Recent developments in quantum information have led to consider general abstractions of such composition procedures and, from a categorical perspective \cite{Ab08}, the mathematical structure underlying the notion of composition is the structure of a monoidal category \cite{Se10}.  In this setting the composition of systems is described as an abstract binary operation, that can be denoted as $\otimes$ (even if, in general, it is not the standard tensor product of algebras or vector spaces).  The system $A \otimes B$ is interpreted as the composite system made of subsystems $A$ and $B$.  Thus systems are constructed by combining subsystems together.   

The groupoid picture of quantum mechanics is categorical in a natural way and, in these sense, the previous abstractions of the notion of composition could be used, however we will choose not to do that.  Instead, using  the conceptual framework provided by the groupoids description of quantum systems, we will proceed by analyzing first  the basic physical ideas behind the notion of subsystems, that appears as a natural ingredient in the notion of composition of systems, and we will proceed afterwards to address the problem of composition.  

Proceeding in this way, we will arrive   to the natural definition of subsystems as subgroupoids of the groupoid describing the given quantum system (see Sect. \ref{sec:subsystems} below).   Once the notion of subsystem is elucidated, we  simply have to consider the composition of two systems (i.e., groupoids) as any groupoid constructed out of them and such that possesses both as subgroupoids.  We will be able to analyze from this perspective two such constructions, the so called direct products and free products.   Both notions will be discussed at length in Sect. \ref{sec:composition}.   

As it will be shown in the main text, the direct product of groupoids correspond to the common definition of composition of systems, however it will be shown that this construction is not as innocent as was suggested before. In fact, taking a strict approach to the problem of composition, we will see that the direct product does not satisfy the conditions determining a proper composition because the original groupoids are not subgroupoids of their direct product.   This apparently paradoxical behaviour of the direct product could be responsible for some of the difficulties in the interpretation of experiments, like the Einstein-Podolski-Rosen-Bohm (EPRB) experiment, where it is a crucial ingredient.   

On the other hand, the notion of free product of groupoids is a proper composition operation where the original groupoids are natural subgroupoids of the free product.   The physical interpretation of this operation, that extends in a natural way the free product of groups, is closer to two relevant physical notions such as the histories interpretation of dynamics and the splitting and recombination of physical systems by introducing `walls'.   The latter ideas will be discussed in the main text and the former one will be left for subsequent developments. 

Deeply related to the notion of composition of subsystems is the notion of `independence' of two given subsystems.  From a physical perspective, we may follow Landau's  insight \cite[Chap. 1, p. 7]{La80}:

 ``\textit{By statistical independence we mean that the state of one subsystem does not affect the probabilities of various states of the other subsystems}.'' 

Determining conditions under which two given subsystems are independent, either statistically or physically, is critical to complete the analysis of composition of systems as it hinders the intepretation of the experiments performed on them.  The second part of this work will be devoted to the analysis of this problem both from a general theoretical perspective and taking into account the specifics introduced by the groupoids description of quantum systems.    

We will be able to provide a general notion of independence of subsystems that will encompass the usual notion of independence (which is modelled on the notion of tensor product of algebras, hence on the direct composition provided by the direct product of groipoids) as well as the extremely fruitful and genuinely non-commutative notion of free independence introduced by D.  Voiculescu \cite{Vo92, Vo94}.   This will be achieved by following the lead started in \cite{Ci19} where a first principle approximation to the notion of composition of systems was discussed.  It will be shown that, for appropriate selected states, both notions of independence, which are nevertherless different, reproduce Landau's notion of statistical independence of subsystems.    Such states for which the corresponding subsystems are statistically independent include, as it should be, separable states in the standard description of quantum mechanics, but many other relevant possibilities are open.  In particular it will be shown that an extension of the notion of factorizable states introduced in \cite{Ib18c} leads also to meaningful statistically independent subsystems.

This paper will end by applying the previous arguments to the particularly interesting situation posed by the EPRB argument.    The EPRB experiment will be analyzed from the point of view of groupoids.  It will be shown that the apparently paradoxical properties of the system are obtained without referring to any interpretation in terms of subsystems (separable or not) and, only if we insist on interpreting the system as a composition of two independent subsystems, such difficulties appear.  Actually, it will be shown that such  attitude is not consistent with the groupoid picture of quantum mechanical systems and that standard composition does not allow for a natural interpretation in terms of composition of independent subsystems of the original system.
 
%%%%%%%%%%%%%%%%

\subsection{A succinct review of the groupoids picture of quantum mechanics}\label{sec:review}

We will provide a succinct review of the basic notions and terminology required to describe quantum systems in the groupoid picture that constitutes an abstraction of Schwinger's algebra of selective measurements (see \cite{Ib20} for a recent succinct review of the main notions and \cite{Ib18a, Ib18b, Ib18c} for more details).

In the groupoid picture of Quantum Mechanics, a physical system is described by a groupoid $\mathbf{G} \rightrightarrows \Omega$ where its morphisms \hbox{$\alpha\,:\,a\, \rightarrow \,a'$, $a,\,a' \in \Omega$} will be called \textit{transitions} and they are an abstraction of Schwinger's selective measurements or, even in more abstract terms, they provide an abstraction of the notion of `quantum jumps'. The objects of the groupoid, i.e., the elements $a \in \Omega$, will be called the \textit{outcomes} of the system and they correspond to the outcomes of observations performed on the system. Finally, the units corresponding to the outcome $a$ will be denoted by $\mathbf{1}_a$, $a\in \Omega$. The source and the target maps will be defined, respectively, as $s(\alpha)=a $  and $t(\alpha)=a'$, where $\alpha:\,a\,\rightarrow\,a'$ is the transition that changes the outcomes of the system when $\mathbb{A}$ is measured from $a$ to $a'$.  If the transitions $\alpha$, $\beta$ are such that $s(\alpha) = t(\beta)$, they will be said to be composable and their composition will be denoted by $\alpha \circ \beta$.

Some comments on the physical interpretation of the groupoid axioms are in order now. The constraints imposed on the composition of transitions and associativity property of the composition, $\alpha \circ (\beta \circ \gamma) = \alpha \circ (\beta \circ \gamma)$, $\forall \alpha, \beta, \gamma$ composable, reflect the causal structure of the experimental setting. Note that, in this sense, the right to left composition convention relates to the definition of the future in the laboratory system (while reading the composition from left to right will correspond to ``time'' flowing backwards in the given experimental setting).

The invertibility axiom, that is, $\forall \alpha, \exists \alpha^{-1}$ such that $\alpha \circ \alpha^{-1} = \mathbf{1}_{t(\alpha)}$ and $\alpha^{-1}\circ \alpha = \mathbf{1}_{s(\alpha)}$ reflects Feynman's principle of \textit{microscopic reversibility} (see \cite[p. 3]{Fe05}: ``\textit{The fundamental (microscopic) phenomena in nature are symmetrical with respect to the interchange of past and future}''). 

Thus, the physical interpretation of both entities, `outcomes' and `transitions', is given, using an operational approach, as all possible results obtained once a given experimental setting has been fixed by performing an arbitrary number of experiments on a family of compatible observables.  Transitions are, on the other side, the physical changes, either induced by the experimental setting or happening spontaneously, in the outcomes of the system. 
In other words, outcomes $a\in \Omega$ have the meaning of physical occurrences registered by the physical device used in studying the transitions of the system and they determine in a natural way the units of the groupoid, i.e., $\forall a \in \Omega$ there is a transition $\mathbf{1}_a$ that does not affect the system whenever it returns $a$ when $\mathbb{A}$ is measured, that is, such that $\alpha\circ \mathbf{1}_a = \alpha$; $\mathbf{1}_a\circ \alpha = \alpha$ $\forall \alpha :\, a\, \rightarrow \, a'$.

Given a groupoid $\mathbf{G} \rightrightarrows \Omega$ describing (so far just the kinematical structure of) a physical system, functions defined on it will be good candidates for observables of the system. Actually, we may construct in several ways a $C^*$-algebra of observables $C^*(\mathbf{G})$ associated with the groupoid. For instance, we may consider $C_c(\mathbf{G})$, the algebra of compactly supported functions in $\mathbf{G}$ equipped with the convolution product and $C^*(\mathbf{G}) = \overline{C_c(\mathbf{G})}$, the closure determined by some natural norm. For instance if $\mathbf{G}$ is discrete countable, $C_c(\mathbf{G})$ will be the set of functions $f:\,\mathbf{G}\, \rightarrow \, \mathbb{C}$ which are different from zero on a finite number of elements of $\mathbf{G}$, i.e., $f= \sum_{\alpha \in \mathbf{G}} f(\alpha)\delta_{\alpha}$, $f(\alpha)=0$ except for a finite number of $\alpha$'s, and the associative product is defined as
\begin{equation}
f \star g = \left( \sum_{\alpha \in \mathbf{G}} f(\alpha)\delta_{\alpha} \right) \star \left( \sum_{\beta \in \mathbf{G}} g(\beta)\delta_{\beta} \right) = \sum_{s(\alpha)=t(\beta)}f(\alpha)g(\beta)\delta_{\alpha\circ \beta}\,.
\end{equation}
Moreover, the adjoint $f^*$ of the function $f$, is defined as $f^* = \overline{f(\alpha)}\delta_{\alpha^{-1}}$ and the closure can be taken with respect to the weak topology on the space of bounded operator on the Hilbert space $\mathcal{L}^2(\mathbf{G})$ of square integrable functions on $\mathbf{G}$ where the functions $f$ are represented by means of the left-regular representation $(\lambda(f)\psi)(\beta)= \sum_{t(\alpha)=t(\beta)}f(\alpha)\psi(\alpha^{-1}\circ \beta)$.    

In any case the algebra $C^*(\mathbf{G})$ carries a $C^*$-algebra structure that distinguishes its real (or self-adjoint) elements, i.e., those ones such that $f^* = f$, and that can be identified with physical observables. Note that in such a case, $f(\mathbf{1}_x)=f(x)$ can be interpreted as the actual value of the observable $f$ when the system is ``selected'' at $x\in \Omega$ and $f(\alpha)$ will be associated with the transition amplitude of the observable $f$ between the events $x$ and $y$ connected by the morphism $\alpha:\,x\, \rightarrow \, y$. We will consider our $C^*$-algebra to be unital, i.e. $\mathbf{1} \in C^*(\mathbf{G})$. 

States of the system, on the other hand, will be determined as positive normalized linear functionals $\rho:\,C^*(\mathbf{G})\rightarrow\, \mathbb{C}$, i.e. $\rho(f^* f) \geq 0$ and $\rho(\mathbf{1}) = 1$. This entails the natural statistical interpretation of states as providing the expectation values of observables. Indeed, the number $\rho(f)$ will be interpreted as the expectation value of the observable $f$ (note that if $f$ is real, then $\rho(f)$ is a real number, too):   
\begin{equation}
\left\langle f \right\rangle_{\rho} = \rho(f) = \sum_{\alpha \in \mathbf{G} } f(\alpha) \rho(\delta_{\alpha})\,.
\label{eq:expectation value}
\end{equation}
Denoting by $\phi_{\rho}: \, \mathbf{G}\, \rightarrow \, \mathbb{C}$ the function defined by the state $\rho$ on $\mathbf{G}$ (again, assuming that the groupoid $\mathbf{G}$ is discrete) as:
\be
\phi_{\rho}(\alpha):=\rho(\delta_{\alpha}),
\ee
then $\left\langle f \right\rangle_{\rho} = \sum_{\alpha \in \mathbf{G}} f(\alpha)\phi_{\rho}(\alpha)$ and the expected value of the observable $f$ can be understood as the mean value of the amplitude $f(\alpha)$ over the groupoid $\mathbf{G}$ with respect to the ``distribution'' $\phi_{\rho}$. The function $\phi_{\rho}$ associated to the state $\rho$ is positive definite, that is:
\begin{equation}
\phi_{\rho}\left(\left(\textstyle{\sum_{i=1}^{N}\zeta_{i}\alpha_{i}}\right)^{*}\,\left(\textstyle{\sum_{j=1}^{N}\zeta_{j}\alpha_{j}}\right)\right)=\sum_{i,j=1}^N \overline{\zeta}_i \zeta_j \phi_{\rho}(\alpha_i^{-1}\circ \alpha_j) \geq 0\, , 
\end{equation} 
for all $N\in \mathbb{N}$, $\zeta_i \in \mathbb{C}$, $\alpha_i \in \mathbf{G}$, when the sum is taken over pairs, $\alpha_i, \alpha_j$ with $\alpha_i^{-1}, \alpha_j$ composable. Notice that, if in addition $f$ is a real observable, Eq.\eqref{eq:expectation value} implies $\rho(f)= \overline{\rho(f)}$, then $\phi_{\rho}(\alpha) = \overline{\phi_{\rho}(\alpha^{-1})}$ and $\phi_{\rho}$ itself is an observable. If in addition $\phi_{\rho}(\alpha^{-1})=\phi_{\rho}(\alpha)^{-1}$ we will call this property of the state $\rho$ ``unitarity''. 
The function $\phi_{\rho}$ characterizes the state $\rho$ and allows us to interpret of the state as providing the probability amplitudes of transitions.     

This kinematical background is completed by a dynamical evolution in Heisenberg's form \cite{Ib18b}, \cite{Ib20}, and a theory of transformations that is encoded in a 2-groupoid structure $\boldsymbol{\Gamma}  \rightrightarrows \mathbf{G} \rightrightarrows \Omega$ \cite{Ib18a}, where the morphisms $\varphi \colon \alpha \Rightarrow \beta$ of the groupoid $\boldsymbol{\Gamma}  \rightrightarrows \mathbf{G}$ define the physical transformations that the system can suffer.  We will not continue this discussion here and it will be left for an extended discussion on symmetries and transformations that will be  done elsewhere.

%%%%%%%%%%%%%%%%
%\newpage

\section{Subsystems}\label{sec:subsystems}

%%%%%%%%%%%%

As stated in the introduction, a quantum system is described by a 2-groupoid $\boldsymbol{\Gamma}  \rightrightarrows \mathbf{G} \rightrightarrows \Omega$ consisting of `outcomes', `transitions' and `transformations'.  We will concentrate for the purposes of this study on the first layer of the 2-groupoid structure, that is, on the groupoid $\mathbf{G} \rightrightarrows \Omega$, consisting of outcomes $x \in \Omega$ and transitions $\alpha \colon x \to y$.   

It is not obvious, in principle, what could be the meaning of a `subsystem' of a given system.    We could appeal to the notion of a `part' of the original system, provided that a way of splitting it into smaller parts is available. This would be the obvious interpretation when the system under study is `composed' of various pieces clearly identifiable within the experimental devices used in the description of the system.   We may think, for instance, of a gas.   If the experimental setting consists of a thermometer and a family of other  devicess that allow us to measure the pressure, volume, mass, etc., of it, then it doesn't make sense to think that an individual molecule is a subsystem of the given system (there is no way of determining its structure as a proper system, i.e., its outcomes and transitions for instance).   Actually in classical statistical physics, individual molecules are not considered subsystems of the gas at all.  For instance, subsystems are defined, in a painful and ambiguous way, as in Landau's masterful work \cite[Chap.1, page 2]{La80}: \textit{``A part of the system which is very small compared with the whole system but still macroscopic, may be imagined to be separated from the rest;...  Such relatively small but still macroscopic parts will be called subsystems''.}

If you were expecting that subsystems would be required to keep some relative independence or individuality (like the molecules of the gas), you would find yourself wrong \cite[Chap. 1, page 2]{La80}: \textit{``A subsystem is again a [mechanical] system, but not a closed one; it interacts in various ways with the other parts of the system... these interactions will be very complex and intricate''}.  The lack of independence becomes so  relevant that it is precisely this intricacy in the description of the subsystem that allows to introduce statistical methods to study the whole system \cite[Chap. 1, page 3]{La80}: \textit{``A fundamental feature of this approach is the fact that, because of the extreme complexity of the external interactions with the other parts of the system, will be many times in every possible state''}.  

The previous example from classical statistical physics, shows that the notion of subsystems can be used for many purposes and presented in different forms, not always with the reductionistic idea that we can split the system in smaller pieces (individual and independent in some sense) from which the original system is reconstructed, i.e, that the system is a composition of various subsystems.  Some of these aspects involved in the notion of composition and independence will be discussed at length later on, but only after the notion of subsystem has been properly elucidated because, at is should be clear from the previous comments, it is the key notion that sustains all other elaborations.  

\subsection{Subsystems and subgroupoids}

The development of quantum information has made the discussion on the nature of subsystems even more relevant as `flexible' decomposition of quantum systems into subsystems is crucial to identify degrees of freedom that can be treated as qubits.  For instance, Viola, Knill, and Laflamme \cite{Vi01} and Zanardi, Lidar, and Lloyd \cite{Za04} proposed, as we did before, that the partition of a system into subsystems should depend on which operations are experimentally accessible.   An elaborated conceptual framework encompassing these ideas has been put forward by Chiribella \cite{Chi19} (introducing the notion of an auxiliary agent):  \textit{`The core of our construction is to associate subsystems to sets of operations, rather than observables. To fix ideas, it is helpful to think that the operations can be performed by some agent. Given a set of operations, the construction extracts the degrees of freedom that are acted upon only by those operations, identifying a ``private space'' that only the agent can access. Such a private space then becomes the subsystem, equipped with its own set of states and its own set of operations'.}

Without pretending a complete translation of Chiribella's abstract general framework into the groupoids picture of quantum mechanical systems, it is easy to identify the main ingredients of the notion of subsystem.  Subsystems will be systems, that is, groupoids themselves with their own spaces of outcomes and transitions, that will only contain a certain subset of outcomes and transitions of the original system. This will constitute the `private space' determined by an auxiliary agent in Chiribella's description.   This notion is exactly the mathematical concept of subgroupoid of the given groupoid.  Hence, we can summarize this, admittedly rather long, discussion by stating that subsystems of a given system are subgroupoids.  
In the following paragraphs we will develop some simple consequences of this notion and we will illustrate it with a few relevant examples but, first, we will state the previous definition more formally.

\begin{definition}
Let $\mathbf{G} \rightrightarrows \Omega$ be a groupoid describing a quantum system.  A subgroupoid of the groupoid $\mathbf{G}$ is a subset $\mathbf{H} \subset \mathbf{G}$  such that it becomes a groupoid itself with the composition law inherited from $\mathbf{G}$. Any subgroupoid $\mathbf{H} \subset \mathbf{G}$ will be called a subsystem of the given quantum system.  
\end{definition}

Elaborating on the previous definition (see, for instance, \cite{Ib18,Ib19} for more details on the notion of subgroupoid), if we denote by  $\Lambda   \subset \Omega$, the subset of objects corresponding to units $1_x \in \mathbf{H}$, for any $\alpha \colon x \to y \in \mathbf{H}$, then the units $1_x$, $1_y$ also belong to $\mathbf{H}$ and, consequently $x,y \in \Lambda$.  It is also a consequence of the definition that if $\alpha \colon x \to y \in \mathbf{H}$, then the inverse transition $\alpha^{-1} \colon y \to x \in \mathbf{H}$ too.  Moreover, if $\alpha \colon x \to y$, $\beta \colon y \to z$ are both transitions in the subgroupoid $\mathbf{H}$, then $\beta \circ \alpha \colon x \to z$ must be in $\mathbf{H}$.   Finally, we observe that the associativity law is automatically satisfied as the composition law of the subgroupoid is just the same as that of the groupoid.  The restriction of the source and target maps of the groupoid $\mathbf{G}$ to the subgroupoid $\mathbf{H}$ define, respectively, the source and target maps for the subgroupoid that, accordingly, can be properly denoted as $\mathbf{H} \rightrightarrows \Lambda$.  In what follows we will use both terminologies, subgroupoid or subsystem, interchangeably.

A few examples with help clarifying the notion we are introducing.

\subsubsection{Subsystems of the harmonic oscillator}\label{ex:oscillatorN}

This example describes any quantum system which possesses only a countable set of discrete outcomes.

 Let $\mathbf{A}_\infty$ be the groupoid defining the harmonic oscillator (see \cite{Ib18b} for details).   This groupoid is the groupoid of pairs of the set of natural numbers $\mathbb{N}$. Outcomes are just natural numbers (indicating the energy level of the system) and transitions $\alpha \colon m \to n$, are just pairs of natural numbers $(n,m)$ denoting the physical transition from energy level $m$ to $n$.
If we select a subset $\Lambda \subset \mathbb{N}$ of energy levels, the groupoid of pairs of $\Lambda$ defines a subgroupoid of $\mathbf{A}_\infty$.   For instance, we may choose $\Lambda = \{1,2, \ldots, N \}$, to be the first $N$ levels, hence the corresponding subsystem will be the finite level approximation of the harmonic oscillator obtained when restricting our attention to its first $N$ levels.

We will also notice immediately that the algebra of a subgroupoid is in a natural way a subalgebra of the original groupoid.  In the previous example, for instance, the algebra of the subgroupoid is the algebra $M_N(\mathbb{C})$ of $N\times N$ matrices, which is a subalgebra of the algebra of the groupoid of the harmonic oscillator, i.e.,  the $C^*$-algebra of bounded infinite-dimensional matrices $M_\infty(\mathbb{C})$ (notice that the choice of the subgroupoid fixes the embedding $M_N(\mathbb{C}) \to M_\infty(\mathbb{C})$ which depends on the chosen subgroupoid). 

\subsubsection{Classical systems as subgroupoids} 

Let $\Omega$ denote a set determining the outcomes of a given system, for instance, $\Omega$ can denote a box where a given particle is located and the outcomes $x \in \Omega$ denote the coordinates of the position of the particle (determined by using a detector).   We assume that a topology related to the apparatuses used to detect the particle is introduced.  That is, each detector determines a subset $U_a$ such that the detector is triggered whenever the particle appears in $U_a$.  The family of all such `triggering events' $U_a$ can be used as a sub-basis for a topology on $\Omega$.  We will assume that $\Omega$ is compact with respect to this topology (this will happen, for instance, if the family of subsets $U_a$ provided by the detectors is finite).     It is also natural to consider that the `triggering' sets $U_a$ have different `sizes', say $\mu(U_a)$,  determined by using additional devices in our experimental setting (for instance, an arbitrary tape).  This will be mathematically described by assuming that we are using a (standard) measure $\mu$ on the Borelian $\sigma$-algebra defined by the topology introduced on $\Omega$ (that it, this measure is not associated to a physical state of the system, but to the experimental setting used to describe it).   

Then, if we consider the groupoid of pairs of $\Omega$ as the groupoid describing the system, that is, we are assuming that all transitions $\alpha \colon x \to y$, $x,y$ arbitrary points in $\Omega$, are physically possible, the algebra of the groupoid can be constructed by considering the completion of the algebra of the groupoid  (finite formal linear combinations of its elements) with respect to an appropriate norm.  When using the norm induced from the Hilbert space $L^2(\Omega, \mu)$, the $C^*$-algebra of bounded operators on it will be obtained\footnote{We will consider compactly supported continuous functions $\eta$ on the groupoid, that is $\eta$ will be just a function $\eta (x,y)$ and the corresponding element in the groupoid algebra acts on functions in $L^2(\Omega, \mu)$ as a kernel operator, that is $(\pi_0(\eta) \Psi)(x) = \int_\Omega \eta (x,y) \Psi (y) d\mu (y)$.}.   We will call, as usual, the Hilbert space $\mathcal{H}_\Omega = L^2(\Omega, \mu)$, that supports the fundamental representation $\pi_0$ of the groupoid of pairs $\mathbf{G}(\Omega)$, the fundamental Hilbert space of the system.

The subgroupoid given by the isotropy groups of the groupoid, that in this case happen to be all trivial, is a totally disconnected groupoid. Such subgroupoid, as in the example discussed above, can be considered to provide a classical description of the particle in a box (that is, there are no `quantum transitions' at all). Note that the corresponding $C^*$-algebra of the subgroupoid is given by the space $L^\infty (\Omega, \mu)$ (which is the von Neumann algebra generated by the action of compactly supported functions on $\Omega$ as multiplication operators on $L^2(\Omega, \mu)$).     Actually this Abelian $C^*$-algebra  is the natural setting for commutative probability theory  on $\Omega$, in sharp contrast with the non-commutative probability theory characteristic of quantum systems that would be provided by the $C^*$-algebra of bounded operators on $L^2(\Omega, \mu)$ together with a tracial state defined on it (see \cite{Vo94} for more details on this approach).

Thus, from this perspective, classical systems can be considered as subsystems of quantum ones.    In any case, if $L^\infty(\Omega, \mu)$ is the $C^*$-algebra of the totally disconnected groupoid\footnote{The bold notation $\boldsymbol{\Omega}$ helps to disinguish between the groupoid whose transitions are the units $1_x \colon x \to x$, and the space of objects $\Omega$, thus, the proper notation, rather redundant, for such groupoid will be $\boldsymbol{\Omega} \rightrightarrows \Omega$ where both, the source and target maps being the identity map, coincide.} $\boldsymbol{\Omega}$ describing a classical system ``localized'' in the box $\Omega$, we may consider subsystems of it by selecting a part of the box, for instance we can put a barrier inside the box separating it in two parts, a subset $\Lambda \subset \Omega$ of $\Omega$ and its complementary $\Lambda' = \Omega \backslash \Lambda$.  Then the groupoid of pairs of $\Lambda$ is a subgroupoid of the groupoid of pairs of $\Omega$, and the corresponding totally disconnected subgroupoid $\boldsymbol{\Lambda}$ consisting of the units themselves, is a subsystem of the system $\boldsymbol{\Omega}$.   

The previous considerations are particularly interesting as they are the prototype of an operation performed everytime thermodynamic properties of systems are considered.  It consists of splitting the system introducing partitions, walls, membranes, etc., and are difficult to understand as subsystems of the original system from other perspectives. They are however subsystems, i.e., subgroupoids (quite natural indeed), when considered from the groupoid perspective.

%%%%%%%%%%%%%%%%

\subsection{Subsystems and their algebras}\label{sec:subsystems_subalgebras}

Given a subsystem $\mathbf{H} \subset \mathbf{G}$ of the physical system described by the groupoid $\mathbf{G} \rightrightarrows \Omega$, the algebra of $\mathbf{H}$ is a subalgebra of the algebra of the total system.    In the finite situation, where the $C^*$-algebra of the groupoid $\mathbf{G}$ coincides with its algebraic counterpart $\mathbb{C}[\mathbf{G}]$, the previous statement is obvious, the canonical identification being $\boldsymbol{a} = \sum_{\beta\in \mathbf{H}} a_\beta \beta$ maps to itself but now considered as an element in $\mathbb{C}[\mathbf{G}]$.  

A slightly more elaborated argument shows that the same is true in the case of discrete countable groupoids (in such case, the $C^*$-algebra of the groupoid being defined as in Sect. \ref{sec:review}) and the result in the general situation can be obtained by realizing that the space of compactly supported functions in $\mathbf{H}$ can be embedded naturally in $C_c(\mathbf{G})$ by extending them trivially to $\mathbf{G}$.  In any case, given a subgroupoid $\mathbf{H}$ of the groupoid $\mathbf{G}$, we will assume that there is a natural identification of the $C^*$-algebra $C^*(\mathbf{H})$ as a $C^*$-subalgebra of the $C^*$-algebra $C^*(\mathbf{G})$ of the groupoid $\mathbf{G}$.   An important observation is that the unit $\mathbf{1}_\mathbf{H}$ will not be mapped, in general, into the unit $\mathbf{1}_\mathbf{G}$.  The two units, $\mathbf{1}_\mathbf{H}$, $\mathbf{1}_\mathbf{G}$, will be identified only if the space of outcomes of the subgroupoid $\mathbf{H}$ coincides with the space of outcomes of $\mathbf{G}$.

It is important to realize that the converse of this statment is not true. There may exist subalgebras of the algebra of a system which are not the algebras of any subgroupoid, that is, that do not correspond to a subsystem of the original system (see, for instance, the example discussed right after this paragraph).   We may insist on calling them `subsystems', but they will only be considered as such from a very detached point of view, because if they are not algebras of subgroupoids, the physical interpretation of their structure in terms of physical outcomes and transitions will be impossible, so they do not retain any relation to the original physical system we are studying.  

Consider for instance the simple example of the qubit whose groupoid description was carefuly done in \cite{Ib18a, Ib18b} using the $\mathbf{A}_2$ groupoid.  The algebra of the groupoid $\mathbf{A}_2$ of pairs of two objects $\{ +,-\}$ is the  $C^*$-algebra $M_2(\mathbb{C})$ of $2\times 2$ matrices.   We may consider the subalgebra of upper triangular matrices with unit diagonal, which is not a $C^{*}$-algebra .   This algebra is not the algebra of a groupoid (it can be proved that the algebra of a finite groupoid is necessarily semisimple which is not the case for this algebra which is solvable \cite{Ib19}).    We may argue that this algebra is not a $C^*$-subalgebra of the original $C^*$-algebra, however we can extend this example by considering a higher dimensional situation (we will examine other examples further on in this work).

From the previous examples we realize that given a groupoid $\mathbf{G} \rightrightarrows \Omega$ there are some canonical subsystems associated to it.   The fundamental subgroupoid  $\mathbf{G}_0 \rightrightarrows \Omega$, is  the subgroupoid formed by the disjoint union of the isotropy groups of $\mathbf{G}$, that is $\mathbf{G}_0 = \bigsqcup_{x \in \Omega} G_x$ with $G_x = \mathbf{G}(x,x) = s^{-1}(x) \cap t^{-1}(x) = \{ \alpha \colon x \to x \}$ being the isotropy group at $x$.    The subsystem $\mathbf{G}_0$ is totally disconnected (there are no transitions between different outcomes), it behaves classically but it retains part of the noncommutativity of the original system in the inner structure provided by the isotropy groups.   The algebra of the fundamental subgroupoid is the algebra defined by the direct sum of the algebras of the isotropy groups, i.e., $C^*(\mathbf{G}_0) = \bigoplus_{x \in \Omega} C^*(G_x) \subset C^*(\mathbf{G})$\footnote{In the continuous situation, the direct sum must be replaced by a direct integral.}.    Elements in the algebra of the fundamental subgroupoid have the form $\boldsymbol{a} = \boldsymbol{a}_{x_1} \oplus \cdots \oplus \boldsymbol{a}_{x_r}$, $\boldsymbol{a}_{x_k}  \in C^*(G_{x_k})$, $k = 1, \ldots, r$, and the multiplication is componentwise.   Note that $\boldsymbol{a}_x  \cdot \boldsymbol{a}_y = \boldsymbol{a}_y  \cdot \boldsymbol{a}_x = 0$, for all $x \neq y$, $x,y \in \Omega$.

A further reduction of the original system is obtained when we consider the subgroupoid of the fundamental groupoid obtained by taking just the units $\mathbf{1}_x \in G_x$.   This subgroupoid, denoted as in the previous examples by $\boldsymbol{\Omega} \rightrightarrows \Omega$, is the classical counterpart of the original system, where all trace of non-commutativity has disappeared.  As it was discussed in the examples above, provided that $\Omega$ is a measure space, the algebra of the subgroupoid $\boldsymbol{\Omega} $ can be defined as $C^*(\boldsymbol{\Omega} ) = L^\infty(\Omega)$.  

In the previous situations, our subsystems share the space of objects $\Omega$ with the original system.  In such case we will say that the subsystems are full.    This, however, doesn't have to be the case.  For instance, any isotropy group $G_x$ is a subgroupoid of the original groupoid $\mathbf{G}$, in this case over the space of objects consisting of just one object, the outcome $\{ x \}$.   This subsystem focus our attention on the inner structure of the outcome $x$.   

A more interesting situation happens when we select a subset $\Lambda \subset \Omega$ of outcomes.  We can define the subgroupoid $\mathbf{G}_\Lambda \rightrightarrows \Lambda$ of the original groupoid by selecting only those transitions $\alpha \colon x \to y$, where both $x,y$ are in $\Lambda$, that is:
$$
\mathbf{G}_\Lambda = \{ \alpha \colon x \to y \mid x,y \in \Lambda \} \, .
$$
Clearly $\mathbf{G}_\Lambda$ is a subgroupoid of $\mathbf{G}$ that will be called the restriction of $\mathbf{G}$ to $\Lambda$.  The physical interpretation of this subsystem, is that we are focusing our attention only on the outcomes in $\Lambda$ as in Example \ref{ex:oscillatorN}.  For instance if $\Lambda = \{ x \}$, then $\mathbf{G}_\Lambda = G_x$ as discussed in the paragraph above.

This is an important operation as it allows to decompose the groupoid in smaller subsystems. Thus consider a partition $\Omega = \cup_{i}\Omega_i$ of the space of outcomes $\Omega$ on the parts $\Omega_i$. Then we may consider the subgroupoids $\mathbf{G}_{\Omega_i}$ defined by the restriction of $\mathbf{G}$ to $\Omega_i$.   Then each subsystem $\mathbf{G}_{\Omega_i}$ defines a subalgebra $C^*(\mathbf{G}_{\Omega_i})$ of the algebra $C^*(\mathbf{G})$ of the groupoid $\mathbf{G}$.  Can the algebra $C^*(\mathbf{G})$ be reconstructed from the subalgebras $C^*(\mathbf{G}_{\Omega_i})$? or, in other words, can the system described by the groupoid $\mathbf{G}$ be reconstructed from the subsystems $\mathbf{G}_i$?  We will answer this question in Sec. \ref{sec: States and subsystems}.

%%%%%%%%%%%%

\subsection{States and subsystems}\label{sec: States and subsystems}

 Given a state $\rho$ of the system defined by the groupoid $\mathbf{G}$, there is a natural restriction of such state to any subsystem $\mathbf{H} \subset \mathbf{G}$.  Actually the natural definition of the restriction of the state $\rho$ to $\mathbf{H}$ is given by the linear functional $\rho_\mathbf{H} \colon C^*(\mathbf{H}) \to \mathbb{C}$:
\begin{equation}\label{eq:conditional}
\rho_\mathbf{H}(\boldsymbol{a}) = \rho(\boldsymbol{a}\mid \mathbf{H}) = \frac{\rho(\boldsymbol{a})}{\rho(\mathbf{1}_\mathbf{H})} \, , \qquad \forall \boldsymbol{a} \in C^*(\mathbf{H}) \, .
\end{equation}
Both, the positivity of $\rho_\mathbf{H}$ and its normalization are easily checked. Note the importance of the normalizing factor $\rho(\mathbf{1}_\mathbf{H})$ because, in general, $\mathbf{1}_\mathbf{H}$ is not the unit of $C^*(\mathbf{G})$.  This observation will become particularly relevant in the discussion of the direct product of systems.

The restricted state $\rho_{\mathbf{H}}$ has a natural interpretation as a `conditional probability', that is, Eq. (\ref{eq:conditional}) is the natural extension of the notion of conditional probability to states of algebras of groupoids and their subsystems.  Thus, in this sense, we can say that $\rho(\boldsymbol{a} \mid \mathbf{H})$ is the conditional expectation value of $\boldsymbol{a}$ with respect to $\mathbf{H}$.  

It is not possible, however, to provide a natural extension of a state of a subsystem to the total one.   This is thoroughly consistent with the statistical interpretation of states, as it is not possible to obtain the statistical properties of the total system from that of a subsystem.  Nevertheless, it is possible to extend a state $\sigma \colon C^*(\mathbf{H}) \longrightarrow \mathbb{C}$ to $C^*(\mathbf{G})$, albeit non canonically.   In the following sections some specific choices related to various compositions will be discussed. 

%%%%%%%%%%%%

\section{Composition: free and direct}\label{sec:composition}

%%%%%%%%%%%%

\subsection{Direct composition}\label{sec:direct}

As commented in the introduction, direct composition corresponds to the direct product of groupoids.   If $\mathbf{G}_a \rightrightarrows \Omega_a$, $a = 1,2$, are two groupoids, its direct product is the groupoid $\mathbf{G}_1 \times \mathbf{G}_2 \rightrightarrows \Omega_1 \times \Omega_2$, whose outcomes are pairs $(x_1, x_2)$, $x_a \in \Omega_a$, and whose transitions are pairs $(\alpha_1, \alpha_2) \colon (x_1, x_2) \to (y_1, y_2)$, $\alpha_a \colon x_a \to y_a \in \mathbf{G}_a$, $a = 1,2$. 

Clearly $\mathbf{G}_1 \times \mathbf{G}_2$ is a groupoid with the obvious source and target maps $s (\alpha_1, \alpha_2) = (x_1, x_2)$, $t(\alpha_1, \alpha_2) = (y_1,y_2)$ and composition law $(\beta_1, \beta_2) \circ (\alpha_1, \alpha_2) = (\beta_1 \circ \alpha_1, \beta_2 \circ \alpha_2)$, provided that $\alpha_a, \beta_a$ are composable.  

This mathematical operation provides the most naive physical implementation of Lieb-Yngvason definition of `compound system'.  The two systems lying side-by-side on the laboratory table without any dynamical interaction between them, so that the original experimental setups that allows as to observe the outcomes and transitions of both systems separately, can be performed simulaneously\footnote{It may be argued that a specific measuring device was being used on the two systems, so it cannot be used to determined the outcomes of both systems together, but the solution is simple, and only requires to increase the budget and use another identical device.} and the outcomes we witness are pairs $(x_1, x_2)$ of outcomes of $\mathbf{G}_1$ and $\mathbf{G}_2$, and the transitions we observe are pairs $(\alpha_1, \alpha_2)$ of transitions $\alpha_a$ of each individual system. 
This seems clear enough and has been considered as the natural notion of composition of systems both in classical and quantum physics however, as it will be discussed in the following paragraphs, we should be cautious when using the direct composition of systems when analyzing specific experiments.

Note first that the groupoid algebra of the direct product of groups is the tensor product of the corresponding groupoid algebras.   It is trivial to check it in the finite case as $\mathbb{C}[\mathbf{G}_1 \times \mathbf{G}_2]$ consists of finite linear combinations of elements in $\mathbf{G}_1 \times \mathbf{G}_2$, i.e., $\boldsymbol{a} = \sum_{\alpha_1, \alpha_2} a_{\alpha_1, \alpha_2} (\alpha_1, \alpha_2)$. Then a basis of $\mathbb{C}[\mathbf{G}_1 \times \mathbf{G}_2]$ is given by pairs $(\alpha_1, \alpha_2)$, $\alpha_a \in \mathbf{G}_a$, which is also a basis of the tensor product $\mathbb{C}[\mathbf{G}_1]   \otimes \mathbb{C}[\mathbf{G}_2]$ and will be denoted then by $\alpha_1 \otimes \alpha_2$.

The unit elements in the algebras $\mathbb{C}[\mathbf{G}_a]$, $a = 1,2$, and $\mathbb{C}[\mathbf{G}_1]   \otimes \mathbb{C}[\mathbf{G}_2]$, are given respectively by $\mathbf{1}_a = \sum_{x_a \in \Omega_a} \mathbf{1}_{x_a}$, $a = 1,2$, and 
$$
\mathbf{1}_{12} = \sum_{(x_1,x_2) \in \Omega_1 \times \Omega_2} \mathbf{1}_{(x_1, x_2)} = \sum_{x_1\in \Omega_1,x_2 \in \Omega_2} \mathbf{1}_{x_1} \otimes \mathbf{1}_{x_2} = \mathbf{1}_1 \otimes \mathbf{1}_2  \, .
$$ 
Then, clearly, the algebras $\mathbb{C}[\mathbf{G}_a]$, can be considered as subalgebras of $\mathbb{C}[\mathbf{G}_1]   \otimes \mathbb{C}[\mathbf{G}_2]$ identifying them with $\mathbb{C}[\mathbf{G}_1] \otimes \mathbf{1}_2$ and $ \mathbf{1}_1 \otimes \mathbb{C}[\mathbf{G}_2]$, respectively.   

We might expect that the subalgebras  $\mathbb{C}[\mathbf{G}_1] \otimes \mathbf{1}_2$ and $ \mathbf{1}_1 \otimes \mathbb{C}[\mathbf{G}_2]$, would be interpreted as the algebras of two subsystems, i.e., with the algebras of two subgroupoids, and that these two subgroupoids could be identified with the original groupoids $\mathbf{G}_a$, $a = 1,2$.  However, the situation is not so simple.   

In fact, the groupoid $\mathbf{G}_1$ is not a subgropoid of $\mathbf{G}_1\times \mathbf{G}_2$, very much as $\mathbb{C}[\mathbf{G}_1]$ is not a subalgebra of $\mathbb{C}[\mathbf{G}_1]   \otimes \mathbb{C}[\mathbf{G}_2]$.   The identification of the algebra $\mathbb{C}[\mathbf{G}_1]$ with the subalgebra $\mathbb{C}[\mathbf{G}_1] \otimes \mathbf{1}_2$ helps to understand what is the subsystem we are considering.   Consider the subgroupoid $\mathbf{G}_1 \times \boldsymbol{\Omega}_2$, i.e., the groupoid of pairs $(\alpha_1, 1_{x_2})$, $\alpha_1 \colon x_1 \to y_1 \in \mathbf{G}_1$, $x_2 \in \Omega_2$. 
Similarly, we may consider the subgroupoid $\boldsymbol{\Omega}_1 \times \mathbf{G}_2 \subset \mathbf{G}_1 \times \mathbf{G}_2$.  The algebra of the groupoid $\boldsymbol{\Omega}_1 \times \mathbf{G}_2$ is given by $\mathbb{C}[\boldsymbol{\Omega}_1 \times \mathbf{G}_2] = \mathbb{C}[\boldsymbol{\Omega}_1] \otimes \mathbb{C}[\mathbf{G}_2] = \mathbb{C}^{n_1} \otimes  \mathbb{C}[\mathbf{G}_2]$, with $n_1 = |\Omega_1|$.    The algebra $\mathbf{1}_1 \otimes  \mathbb{C}[\mathbf{G}_2]$  is a subalgebra of  $\mathbb{C}^{n_1} \otimes  \mathbb{C}[\mathbf{G}_2]$ and it is the subalgebra usually considered as describing a `subsystem' of the total system.   

This is a  particularly interesting instance of the situation discussed in Sect. \ref{sec:subsystems_subalgebras} where a subalgebra of the algebra of a groupoid is not the algebra of a subgroupoid, so it cannot be considered as determining a subsystem of the original system.
This observation, as anticipated before has important consequences in the interpretation of relevant physical experiments (see later on the discussion of the EPRB experiment, Sect. \ref{sec:eprb}).

We will end the discussion on the direct composition of systems by observing that because the original systems defined by the groupoids $\mathbf{G}_a$ are not subsystems of the composite systsem $\mathbf{G}_1 \times \mathbf{G}_2$ until we made a specific choice of the identification of $\mathbf{G}_a$ with a subsystem in $\mathbf{G}_1 \times \mathbf{G}_2$, then it is not possible to just restrict a state $\rho$ of the total system to the individual systems $\mathbf{G}_a$.    This implies that the statistical properties derived from the state $\rho$ do not have a direct interpretation in terms of statistical properties of $\mathbf{G}_a$ separately.    Once a specific choice for the subsystem corresponding to $\mathbf{G}_1$ has been made, for instance $\mathbf{G}_1\times \boldsymbol{\Omega}_2$, then we may consider the restriction of the state $\rho$ to the subalgebra $C^*(\mathbf{G}_1)\otimes \mathbb{C}^{n_2}$, and im particular to the element $\mathbf{1}_2 \in  \mathbb{C}^{n_2}$.   This specific choice is commonly called `tracing out' the second system and, in the particular instance of qudits, the resulting state is denoted as $\mathrm{Tr}_2(\rho)$.   We would like to emphasize again that this is not by any means the only possibility, as the choice of any subset $\Lambda_2 \subset \Omega_2$ will define another identification (as good as the previous one), and the corresponding statistical interpretation will differ.    
We will end this discussion after the remarks in Sect. \ref{sec:universal}.

%%%%%%%%%%%%

\subsection{Free composition}\label{free-composition}

After the discussion in the previous section we may wonder if there is another composition operation that will be more natural with respect to the notion of subsystem introduced before, that is, is there a notion of composition such that the original systems become subsystems of the composition and that no additional relations among them appear?  There is, indeed, a well-known mathematical construction that in the case of groups does exactly this.  It is known under the name of free product of groups and we are going to extend it to the category of groupoids.    

Consider the family of groupoids $\mathbf{G}_a \rightrightarrows \Omega_a$, the index $a$ describes a family that does not have to be finite even if, for the purposes of this work and for simplicity, we may consider it to be finite.    It will be convenient to consider an additional set of outcomes $\Omega$ and a family of injective maps $\sigma_a \colon \Omega_a \to \Omega$, that identify the outcomes $x_a$ corresponding to each system $ \mathbf{G}_a$ with an element $\sigma_a(x_a) \in \Omega$.  Thus, we may think of $\Omega$ as providing a common identification for the outcomes corresponding to the various experimental settings associated to the groupoids $\mathbf{G}_a$. 

Then we will construct the family of finite consistent words $w = \alpha_r \alpha_{r-1} \cdots \alpha_2 \alpha_1$, where $\alpha_k \colon x_k \to y_k \in \mathbf{G}_{a_k}$, $k = 1, \ldots, r$, and consistent means that $\sigma_{a_k}(y_k) = \sigma_{a_{k-1}} (x_{k+1})$, $k = 1, \ldots, r-1$ or, in other words, that the sources and targets of consecutive transitions in a given word agree in the common backgraound provided by the space $\Omega$.   Words will also be called `histories' and the individual elements $\alpha_k$ will be called letters or steps accordingly.   

A word or history $w$ will be said to be reduced whenever two consecutive steps that belong to the same groupoid are composed. Note that with this convention the units $1_{x_k}$ will never appear as a letter of a words except when considered as an individual word.   Thus, for instance the reduced word corresponding to the word $\tilde{w} = 1_{y_r} \alpha_r \alpha_r^{-1} \alpha_{r-1}  \cdots 1_{y_2} \alpha_2 1_{x_2} \alpha_1 1_{x_1}$, is the word $w' = \alpha_{r_1} \cdots \alpha_1$.   We will denote the reduced word of a given word by $(w)_\mathrm{red}$, thus, in the previous example $(\tilde{w})_\mathrm{red}= w'$.  It is relevant to emphasize again that consecutive steps of a reduced word always  belong to different groupoids, that is, $w_{\mathrm{red}} = \alpha_r \cdots \alpha_1$, then 
$\mathbf{G}_{a_{k+1}} \neq \mathbf{G}_{a_{k}}$, $k = 1, \ldots, r-1$.   We will call a history consisting of just one step trivial.   

We will call the space of reduced words with steps in the family of groupoid $\mathbf{G}_a \rightrightarrows \Omega_a$, and identification maps $\sigma_a \colon \Omega_a \to \Omega$, the free product of the groupoids $\mathbf{G}_a$ with respect to the family of maps $\{ \sigma_a\}$,  and it will be denoted as $\star_{\sigma_a} \mathbf{G}_a$ (or just $\star_a \mathbf{G}_a$ if there is no risk of confussion). There are natural source and target maps $s,t \colon \star_{a}\mathbf{G}_a \to \bigcup_a \Omega_a$, given respectively by $s(w) = s(\alpha_1) = x_1 \in \Omega_{a_1}$ and $t(w) = t(\alpha_r) = y_r \in \Omega_{y_r}$, provided that $w = \alpha_r \cdots \alpha_1$ is a reduced word.   

Note that  the spaces of objects $\Omega_a$ do not have to coincide. The union of all of them  $\bigcup_a \sigma_a(\Omega_a) \subset \Omega$ will be called the object space of the free product and it will be assumed that $\bigcup_a \sigma_a(\Omega_a) = \Omega$.   The composition law that makes the free product of groupoids into a groupoid itself is the natural concatenation of consistent histories followed by reduction, that is, the histories $w$ and $w'$ can be composed if $t(w) = s(w')$, and then we define:
$$
w' \circ w = (w'w)_\mathrm{red} \, .
$$
It is a simple check to show that the composition operation thus defined is associative and that the histories given by the units themselves are the units of this operation.   Finally we observe that given the reduced history $ w = \alpha_r \alpha_{r-1} \cdots \alpha_2 \alpha_1$ its inverse is the history $w^{-1} = \alpha_1^{-1} \alpha_2^{-1}\cdots \alpha_{r-1}^{-1} \alpha_r^{-1}$ (clearly $w^{-1}\circ w = 1_{x_1}$ and $w \circ w^{-1} = 1_{y_r}$).   

The most striking fact about the free product of a family of groupoids in comparison with the direct product of groupoids defined earlier, is that any groupoid $\mathbf{G}_a$ is a subgroupoid of $\star_a \mathbf{G}_a$, the canonical embedding $i_a \colon  \mathbf{G}_a \to \star_a \mathbf{G}_a$ is the map the sends each transition $\alpha_a \colon x_a \to y_a$, into the trivial history $i_a(\alpha_a)$ consisting just of the step $\alpha_a$ (note that the source and target outcomes of the history $i_a(\alpha_a)$ are given by $\sigma_a(x_a)$ and $\sigma_a(y_a)$ respectively).

A few important observations are in order now.   If the groupoids $\mathbf{G}_a$ are ordinary groups, then the free product of them coincides with the well-known free product of groups.  The free product of groups is used, for instance, to describe the homotopy group of a bouquet of topological spaces.  Consider, for instance, as a simple example the bouquet of circles $V_{k = 1}^n S_k$, consisting of the quotient of $n$ circles $S_i$ where a point $p_k \in S_k$ in each one of them has been identified. Then $\pi_1(V_{k = 1}^n S_k)$ is the free product of the homotopy groups of each one of them, i.e., the free product of $\mathbb{Z}$ $n$-times.  It is not hard to convince oneself that the free product of $\mathbb{Z}$ with itself $n$-times is the free group with $n$ generators $\mathbb{F}_n$.

A relevant situation happens when the spaces of outcomes of each groupoid are disjoint.  In such a case there are no nontrivial histories because there are no transitions in different groupoids that are consistent pairwise, hence they cannot form histories.  Moreover, when we form histories with steps in the same groupoid, when reducing them, the steps have to be composed within the groupoid, and the resulting reduced word is trivial.  Thus we have shown that if $\mathbf{G}_1 \rightrightarrows \Omega_1$, $\mathbf{G}_2 \rightrightarrows \Omega_2$, have disjoint spaces of outcomes, $\Omega_1 \cap \Omega_2 = \emptyset$, then the free product $\mathbf{G}_1 \star \mathbf{G}_2$ is just the disjoint union $\mathbf{G}_1 \bigsqcup \mathbf{G}_2$ of both groupoids.   Hence the disjoint union of groupoids is just an instance of the free product.  

%%%%%%%%%%%%%%

\subsection{Direct vs. free composition: Universal properties}\label{sec:universal}

Apart from the different mathematical definitions, the physical properties of the direct and free product of groupoids as operations with physical systems are significantly different and lead to relevant physical interpretations.   Thus, from the point of view of the experimental settings associated to each operation, we have agreed that the direct product of two physical systems corresponds to the operation of putting them `side-by-side' on the laboratory table in such a way that the outcomes of both of them can be simultaneously recorded.   The transitions of both of them will happen without affecting each other but, from the point of view of subsystems, the original systems do not form subsystems of the direct product.  Note that we cannot make significant statistical observations about the first system unless we clearly specify what the second is doing, or in other words, unless we `freeze' the second system in a specific configuration (for instance, the configuration given by $\mathbf{1}_2 = \sum_{x_2 \in \Omega_2} \mathbf{1}_{x_2}$ with uniform probability distribution).     Clearly, this is a choice made by the observers and there is nothing  natural on it.    

Contrary to this, the free product allows to extend the possible outcomes of the system, creating a large system whose outcomes encompass all possible outcomes as well as new transitions.

In any case, and as a matter of future reference, we must point out that both products, direct and free are characterized by universal properties, that is they are the compositions which are consistent with the notion of subsystems (the free product) and with the notion of  coarse-graining or reduction (the direct product).

We have already seen that there is a canonical family of monomorphisms $i_a \colon \mathbf{G}_a \to \star_a \mathbf{G}_a$, then if $\varphi_a \colon \mathbf{G}_a \to \mathbf{G}$ is another family of monomorphisms in the groupoid $\mathbf{G} \rightrightarrows \Omega$ then there is a monomorphism $\Phi \colon \star_a \mathbf{G}_a \to \mathbf{G}$ such that $\Phi\circ i_a = \varphi_a$ for all $a$.   The monomorphism $\Phi$ is defined as $\Phi (\alpha_r \cdots \alpha_1 ) = \varphi (\alpha_r) \circ \cdots \circ \varphi (\alpha_1) $.  In categorical terms we say that a product satisfying the previous property is a coproduct.   

Similarly, given the direct product of groupoids $\prod_a \mathbf{G}_a$, there is a family of canonical epimorphisms of groupoids $\pi_a \colon \prod_a \mathbf{G}_a \to  \mathbf{G}_a$ given by $\pi (\alpha_1, \ldots, , \alpha_a, \ldots) = \alpha_a$. Then, given another family of epimorphisms $\eta_a \colon \mathbf{G} \to \mathbf{G}_a$, there is a groupoid epimorphism $\Phi \colon G \to \prod_a \mathbf{G}_a$, such that  $\pi_a \circ \Psi = \eta_a$ for all $a$.  

The previous discussion illuminates the deep reason behind the different behaviour of states with respect to direct and free products.   Free product behave naturally with respect to immersions (i.e., with respect to subsystems) while the direct product behaves naturally with respect to projections\footnote{In relation with differential geometry, one may think of submanifolds versus quotient manifolds, or covariant tensors versus contravariant tensors.}. Hence states can be restricted to factors defining a free product but not with respect factors defining a direct product. On the other hand, states can be pulled back from factors to direct products and that leads to the notion of separable states.  

Let $\mathbf{G} = \prod_\iota \mathbf{G}_\iota$ be a direct product of groupoids. Then a state $\rho_\iota$ corresponding to the groupoid $\mathbf{G}_\iota$ defines a state $\rho = \pi_\iota^* \rho_\iota$ on $\mathbf{G}$ as:
$$
\rho (\boldsymbol{a}_1 \otimes \cdots \otimes \boldsymbol{a}_\iota \otimes \cdots \otimes \boldsymbol{a}_n ) = \rho_\iota (\boldsymbol{a}_\iota) \, ,
$$
ands we extend it to an arbitary $\boldsymbol{a}\in C^*(\mathbf{G})$ by linearity.
States of the form $\rho = \pi_1^*\rho_1 \otimes \cdots \otimes \pi_n^*\rho_n$ will be called (homogenous) separable and convex combinations of them will be called separable. They are the direct product analogue to factorizable states.

\subsection{Example: the free product $\mathbf{A}^{(a)}_2 \star_{\sigma} \mathbf{A}^{(b)}_2$}

In this subsection, we will present an example of the free composition of two groupoids of pairs $\mathbf{A}^{(a)}_2$ and  $\mathbf{A}^{(b)}_2$, along the maps $\sigma_a$, $j=a,b$, each of them corresponding to a two-level quantum system. 
%%%
The direct product of the same groupoids will be analyzed in section \ref{EPRB_experiment}. 
%%%

The space of objects of each groupoid $\mathbf{A}^{(j)}_2$ contains two elements, say $ \Omega_j = \left\lbrace +_j,-_j \right\rbrace$, and there are four morphisms, say, $\mathbf{A}^{(j)}_2 = \left\lbrace \mathbf{1}_{+_j}, \mathbf{1}_{-_j}, \alpha_j, \alpha_j^{-1} \right\rbrace$, where $\alpha_j\, \colon \, +_j \, \rightarrow \, -_j$. 
As previously explained, the free product of two groupoids will depend on two injective maps $\sigma_a, \sigma_b$, from the two spaces of objects, $\Omega_a$ and $\Omega_b$, into the space of objects of the composed groupoids,  $\Omega$, respectively. 
%%%
The images of these two maps can have zero, one or two common elements, and, according to these choices, we obtain three different composed groupoids. Let us analyze each of these instances separately.

If $\sigma_a(\Omega_a) \cap \sigma_b(\Omega_b) = \varnothing $, then $\mathbf{A}^{(a)}_2 \star_{\sigma_j} \mathbf{A}^{(b)}_2 =  \mathbf{A}^{(a)}_2 \sqcup \mathbf{A}^{(b)}_2$. 
%%%
The space of objects $\Omega$ of the free product is the disjoint union $\Omega_a \sqcup \, \Omega_b = \left\lbrace +_a, -_a, +_b, -_b \right\rbrace$, and analogously for the space of morphisms.
%%%
 The groupoid $C^*$-algebra $C( \mathbf{A}^{(a)}_2 \star_{\sigma_j} \mathbf{A}^{(b)}_2 )$ is isomorphic to the direct sum of the groupoid $C^*$-algebras of each groupoid of pairs, i.e., to $C(\mathbf{A}^{(a)}_2)\oplus C(\mathbf{A}^{(b)}_2)$.

\vspace{15pt}

Now, let us consider the situation where $\Omega = \left\lbrace +, 0, - \right\rbrace$ with $\sigma_a(+_a)=+$, $\sigma_a(-_a)= 0$ and $\sigma_b(+_b)=0$, $\sigma_b(-_b)=-$.
%$\Omega_{\mathbf{A}_2 \star \mathbf{A}_2} = \sigma_a(\Omega_a) \cup \sigma_b(\Omega_b) = \left\lbrace +, 0, - \right\rbrace$ with $\sigma_a(+_a)=+$, $\sigma_a(-_a)= 0$ and $\sigma_b(+_b)=0$, $\sigma_b(-_b)=-$. 
%%%
Since $\sigma_a(\Omega_a)\cap \sigma_b(\Omega_b) = \left\lbrace 0 \right\rbrace$, we can build the reduced words $\left\lbrace \alpha_a, \alpha_b, \alpha_b\alpha_a \right\rbrace$ and their inverses $\left\lbrace \alpha_a^{-1}, \alpha_b^{-1}, \alpha_b^{-1}\alpha_a^{-1} \right\rbrace$, while the units are $\left\lbrace \mathbf{1}_+, \mathbf{1}_0, \mathbf{1}_- \right\rbrace$. 
%%%
Therefore, this composed groupoid is homomorphic to the groupoid of pairs $\mathbf{A}_3$ over three elements. 
%%%
Its groupoid $C^*$-algebra $C(\mathbf{A}_3)$ is then isomorphic to the algebra of three-dimensional square matrices with complex entries, i.e., $M_3(\mathbb{C})$, which is the algebra associated with a qutrit. 
%%%
This unital algebra contains a commutative unital subalgebra $\mathbf{B} = \mathbb{C}^3$ generated by the elements $\left\lbrace \mathbf{1}_+, \mathbf{1}_0, \mathbf{1}_- \right\rbrace$. 
%%%
We will see that $C(\mathbf{A}_3)$ can be thought of as the amalgamated free product of two $C^{*}$-algebras, each of which is isomorphic to $C(\mathbf{A}_{2})\oplus\mathbb{C}$, over the common subalgebra $\mathbf{B}=\mathbb{C}^{3}$.
%%%

Let us explain briefly the notion of amalgamated free product over a common subalgebra (for more details see for instance \cite{Vo92,Vo85,Pe99,Po93}).
%%%
Given two unital $*$-algebras, say $\mathbf{B}_1$ and $\mathbf{B}_2$, with a common subalgebra $\mathbf{B}$, we can construct the amalgamated free product of the two algebras with amalgamation over the subalgebra $\mathbf{B}$, denoted by $\mathbf{B}_1 \star_{B} \mathbf{B}_2$, as follows. The first step is the definition of the vector space with a basis made up of the words 
$$
\left\lbrace b_1b_2b_3\cdots b_n | n\in \mathbb{N}\right\rbrace,$$
where, if $b_j \in \mathbf{B}_1$, then $b_{j+1}\in\mathbf{B}_2 $ and conversely. 
%%%
Then, one has to quotient this vector space by the subspace generated by the relations of the form
\begin{eqnarray}
& b_1\cdots b_{j-1}\left( \lambda_0 b^{(0)}_j + \lambda_1b^{(1)}_j\right)b_{j+1}\cdots b_{n} = \nonumber\\
& = \lambda_0 b_1 \cdots b_{j-1}b^{(0)}_jb_{j+1}\cdots b_n + \lambda_1 b_1 \cdots b_{j-1}b^{(1)}_j b_{j+1}\cdots b_n \label{quotient_1}
\end{eqnarray}
and
\begin{eqnarray}
&  b_j \in \mathbf{B} \, \Rightarrow \, b_1\cdots (b_{j-1}b_j)b_{j+1}\cdots b_n = b_1\cdots b_{j-1}(b_j b_{j+1})\cdots b_n \label{quotient_2} \,.
\end{eqnarray}
The multiplication rule between two words consists in their juxtaposition with a subsequent reduction with respect to the above relations. 

Let us now come back the the groupoid $C^*$-algebra $C( \mathbf{A}^{(a)}_2 \star_{\sigma_j} \mathbf{A}^{(b)}_2 )\cong C(\mathbf{A}_3)$. Let $\mathbf{B}_1$ be the unital $C^*$-subalgebra generated by the elements $\left\lbrace \mathbf{1}_+, \mathbf{1}_0, \mathbf{1}_-, \alpha_a, \alpha_a^{-1} \right\rbrace$ and let $\mathbf{B}_2$ be the unital $C^*$-subalgebra generated by $\left\lbrace \mathbf{1}_+, \mathbf{1}_0, \mathbf{1}_-, \alpha_b, \alpha_b^{-1} \right\rbrace$. 
%%%
The algebras $\mathbf{B}_1$ and $\mathbf{B}_2$ possess the common unital subalgebra $\mathbf{B} = \mathbb{C}^3$, generated by $\left\lbrace \mathbf{1}_+, \mathbf{1}_0, \mathbf{1}_- \right\rbrace$. 
%%%
Referring to the construction of the amalgamated free product over a common subalgebra recalled above, a direct inspection shows that 
\begin{equation}
C( \mathbf{A}^{(a)}_2 \star_{\sigma_j} \mathbf{A}^{(b)}_2 ) = \mathbf{B}_1 \star_B \mathbf{B}_2.
\end{equation} 
%%%
Indeed, the relations in Eqs. \eqref{quotient_1} and \eqref{quotient_2}, enforce that the only non-zero words are those for which $(b_1b)b_2 = b_1(b b_2) \neq 0$. In particular, if $b_1 = \alpha_a$ the only non-vanishing word is $\alpha_a\alpha_b$ because Eq.\eqref{quotient_2} imposes that $b_2$ must have a non vanishing product with $\mathbf{1}_0$. 
%%%
Finally, it is interesting to note that $\mathbf{B}_1$ is the groupoid $C^*$-algebra $C(\tilde{\mathbf{A}}^{(a)}_2)$, where $\tilde{\mathbf{A}}^{(a)}_2$ is the extended groupoid $i_a(\mathbf{A}^{(a)}_2) \sqcup \left\lbrace \mathbf{1}_- \right\rbrace $ and $i_a$ is the canonical embedding described in section \ref{free-composition}.
%%%
Analogously, $\mathbf{B}_2$ is the groupoid $C^*$-algebra $C(\tilde{\mathbf{A}}^{(b)}_2)$, where $\tilde{\mathbf{A}}^{(b)}_2 = i_b(\mathbf{A}^{(b)}_2)\sqcup \left\lbrace \mathbf{1}_+ \right\rbrace$.
%%%
Therefore, we can write 
\begin{equation}
C( \mathbf{A}^{(a)}_2 \star_{\sigma_j} \mathbf{A}^{(b)}_2 ) \simeq C(\tilde{\mathbf{A}}^{(a)}_2) \star_B C(\tilde{\mathbf{A}}^{(b)}_2).
\end{equation}

\vspace{15pt}

As a final example, let us consider the situation where $\Omega = \left\lbrace +,- \right\rbrace$. A possible choice for the maps $\sigma_a$ and $\sigma_b$ is the following one:
\begin{eqnarray}
& \sigma_a(+_a) = +\,, \; \sigma_a(-_a) = - \\
& \sigma_b(+_b) = +\,, \; \sigma_b(-_b) = - ,
\end{eqnarray}
so that  $\sigma_a(\Omega_a)\cap \sigma_b(\Omega_b)= \Omega$. 
%%%
In this case, there are infinitely many new words that can be constructed since $\alpha_b^{-1}\alpha_a$ is an element of the isotropy group $\mathbf{G}_+$ of the element $+$, whereas $\alpha_a\alpha_b^{-1}$ is an element of the isotropy group $\mathbf{G}_-$ of the element $-$. 
%%%
Indeed, the isotropy groups, which must be isomorphic because the groupoid is connected, are isomorphic to the free group generated by one element (in the case of the isotropy group $\mathbf{G}_+$, the element is $\alpha_b^{-1}\alpha_a$), which coincides with the group $\mathbb{Z}$. 
%%%
Let $\mathbf{I}$ be the subgroupoid of $\mathbf{A}^{(a)}_2 \star_{\sigma_j} \mathbf{A}^{(b)}_2$ made up of the units $\left\lbrace \mathbf{1}_+ , \mathbf{1}_- \right\rbrace$, and $\mathbf{G}_0$ be the fundamental subgroupoid, i.e., $\mathbf{G}_0=\underset{x\in \Omega}{\bigsqcup} \mathbf{G}_x $. 
%%%
Associated with the groupoid $\mathbf{A}_2^{(a)}\star_{\sigma_j} \mathbf{A}_2^{(b)} $, we can consider the fundamental short exact sequence of groupoids \cite{Ib19} :
\begin{equation}
\mathbf{I} \, \rightarrow \, \mathbf{G}_0 \, \rightarrow \, \mathbf{A}_2^{(a)} \star_{\sigma_j} \mathbf{A}_2^{(b)}\, \rightarrow \, \mathbf{A}_2 \, \rightarrow \, \mathbf{1}_{\ast},
\end{equation}       
where $\mathbf{A}_2 $ is the groupoid of pairs over the space of objects $\Omega_{\mathbf{A}_2\star \mathbf{A}_2} $, and $\mathbf{1}_{\ast} $ is the trivial groupoid with only one object and one morphism, say $\mathbf{1}_{\ast} = \left\lbrace e \right\rbrace $. 
%%%
This sequence splits because either $i_a$ or $i_b$, the canonical embeddings defined in section \ref{free-composition}, provides a homomorphism of the groupoid of pairs $\mathbf{A}_2$ into the groupoid $ \mathbf{A}_2^{(a)}\star_{\sigma_j} \mathbf{A}_2^{(b)} $. 
%%%
Consequently, according to \cite[Theorem 6.2]{Ib19}, one has that: 
\begin{equation}
\mathbf{A}_2^{(a)}\star_{\sigma_j} \mathbf{A}_2^{(b)} \simeq \mathbb{Z}\times \mathbf{A}_2 .
\end{equation}
%%%
We can now compute the groupoid $C^*$-algebra associated with this groupoid. 
%%%
On the one hand, since $\mathbf{A}_2^{(a)}\star_{\sigma_j} \mathbf{A}_2^{(b)} \simeq \mathbb{Z}\times \mathbf{A}_2$, the $C^{*}$-algebra $C(\mathbf{A}^{(a)}_2 \star_{\sigma_j} \mathbf{A}^{(b)}_2)$ is isomorphic to the tensor product $L(\mathbb{Z}) \otimes C(\mathbf{A}_2) $, where $L(\mathbb{Z}) $ denotes the group-algebra of the free group with one generator. 
%%%
On the other hand, following the same reasoning as in the previous example (where $\Omega = \left\lbrace +, 0, - \right\rbrace $), one can see that the groupoid $C^*$-algebra $C(\mathbf{A}^{(a)}_2 \star_{\sigma_j} \mathbf{A}^{(b)}_2)$ is isomorphic to the amalgamated free product $C(\mathbf{A}_2^{(a)})\star_B C(\mathbf{A}_2^{(b)}) $, where $\mathbf{B}$ is  the unital algebra generated by the elements $\left\lbrace \mathbf{1}_+ , \mathbf{1}_- \right\rbrace$. 
%%%

\vspace{10pt}

From this last example, we see that, starting from two systems described by finite groupoids and finite-dimensional $C^{*}$-algebras, the composition associated with the free product  may lead to a system characterized by a countable groupoid with an associated infinite-dimensional $C^{*}$-algebra.
%%%
The physical interpretation of this construction is thus more involved, and will be left for a future work.
%%%
Furthermore, regarding the fact that the $C^{*}$-algebra of the free product of groupoids is the amalgamated free product of $C^{*}$-algebras, the results contained in this subsection are referring only to specific examples, and a more general discussion will be presented in a forthcoming work.   
%%%%%%%%%%%%%%%%

\section{Independence}

In the previous sections we have discussed the notion of subsystems and their composition under the light of the theory of groupoids, finding that the notion of subsystem corresponds to the concept of subgroupoids and that composition of systems can be performed in various ways, each one with its own physical interpretation: direct product of groupoids or, in  more generality, extensions of groupoids and free products of groupoids (that include disjoint unions).

One of the main contributions of the groupoids picture of quantum mechanics so far is that the statistical interpretation provided by quantum measures is equivalent to the determination of a state of the system, the quantum measure is obtained directly in terms of the decoherence functional defined on the groupoid by means of the characteristic function (or amplitude) of the state (see \cite{Ib18c} for a detailed account of these notions). 

Thus, the statistical interpretation of a quantum system is provided by the non-commutative probability space $(C^*(\mathbf{G}), \rho)$, where $C^*(\mathbf{G})$ is the $C^*$-algebra associated to the groupoid $\mathbf{G}$ and $\rho$ is a state.  This is particularly relevant because it allows to reformulate quantum mechanical notions directly in terms of the corresponding  problems in non-commutative probability theory.     

In this vein, a particularly relevant and subtle physical question is the notion of `independence' of subsystems. As A.N. Kolmogorov wrote: ``\textit{...one of the most important problems in the philosophy of the natural sciences is -- in addition to the well-known one regarding the essence of the concept of probability itself -- to make precise the premises which would make it possible to regard any given real events as independent.}'' \cite[Ch.I, \S 5, p.9]{Ko50}.   More precisely, we will consider a physical system described by the groupoid $\mathbf{G} \rightrightarrows \Omega$ and a family of subsystems $\mathbf{H}_a \rightrightarrows \Lambda_a$ of it, when should we interpret them as being independent?

%%%%%%%%%%%%%%%%

\subsection{Landau's statistical notion of independence}

In order to provide an answer to this question, let us recall some standard notions of independence.  From a physical perspective, we may refer again to Landau's  insight \cite[Chap. 1, p. 7]{La80}:

 ``\textit{By statistical independence we mean that the state of one subsystem does not affect the probabilities of various states of the other subsystems}.''

The first observation regarding Landau's notion of independence is its statistical nature, that is, it is not an algebraic property of the subsystems but a statistical property associated to specific states of the systems.
 In a classical setting, it follows immediately from Landau's principle that the statistical distribution $\varrho_{12}$ of a composite system is the product $\varrho_1 \varrho_2$ of the statistical distributions of the separate subsystems and the expectation value of the product of two physical quantities $f_a$, $a = 1,2$, of the corresponding subsystems is the product of the expected values of each one separately, that is:
\begin{equation}\label{eq:statistical_independence}
\langle f_1 f_2 \rangle_{\varrho_{12}} = \langle f_1 \rangle_{\varrho_1} \langle f_2 \rangle_{\varrho_2} \,. 
\end{equation}

More formally, consider a classical system described by the Abelian von Neumann algebra $L^\infty (\Omega, \mu)$, of bounded functions on the probability space $(\Omega, \mu)$ (tipically, $\Omega$ is a phase space and $\mu$ is the corresponding Liouville measure).    States of the system are given by probability distributions $\varrho$ on $\Omega$.  The standard statistical interpretation of the state $\varrho$ is  that the configuration of the system is described by a random variable $\xi$ taking values on $\Omega$, and $\rho$ is the distribution function associated to the random variable $\xi$.    

Then, if the system is the direct product of two subsystems, i.e., the classical groupoid $\boldsymbol{\Omega} = \boldsymbol{\Omega}_1 \times \boldsymbol{\Omega}_2$ is the direct product of the classical groupoids $\boldsymbol{\Omega}_a$, $a = 1,2$ (recall the discussion in Sect. \ref{sec:direct}), the state of the two subsystems are determined by probability distributions $\varrho_a$, $a=1,2$, respectively.  Notice that $L^\infty (\Omega, \mu) =  L^\infty (\Omega_1, \mu_1) \otimes L^\infty (\Omega_2, \mu_2) \cong L^\infty (\Omega_1\times \Omega_2,  \mu_1 \times \mu_2)$ and that the algebras of the two subsystems $L^\infty (\Omega_a, \mu_a)$, $a = 1,2$ are commuting subalgebras of the total algebra $L^\infty (\Omega, \mu)$.  

If the random variable $\xi$ is the direct product of two random variables  $\xi_a$, $a = 1,2$, on $\Omega_a$, $a = 1,2$, respectively, $\xi = \xi_1 \times \xi_2$, i.e., the configuration on each phase space $\Omega_a$, is independent of the position in the other, then, clearly the distribution $\varrho_{12}$ associated to $\xi$ is just the product of the distributions associated to $\xi_a$:
$$
\varrho_{12} = \varrho_1 \varrho_2 \, ,
$$
and we get the statistical independence of expectation values on physical quantities defined on each subsystem given by Eq. (\ref{eq:statistical_independence}).

We will provide an answer to the question about the meaning of independence of physical subsystems by adopting Landau's statistical indepedence principle adapting it to the groupoid description of quantum mechanical systems.   The notion of subsystem has been elucidated  already along this work, however the intuitively apparently clear notion that the `state of the subsystem does not affect the probabilities of various states of the other subsystems' needs to be clarified.

%%%%%%%%%%%%

\subsection{Independence and non-commutative probability spaces}

 Abstracting the notions involved in Landau's principle to make them suitable for non-commutative situations (not just in the study of quantum systems), has led to various choices and interpretations.  
 
 For instance (see, for instance, \cite{Vo92,Vo94}), let us consider that $(\mathcal{A},\rho)$ is a non-commutative $C^*$-probability space, i.e., $\mathcal{A}$ is a unital $C^*$-algebra and $\rho$ is a state on it, then, given a family of unital subalgebras $\mathcal{A}_\iota$, the usual notion of independence (modeled on the notion of tensor products) is that they commute among themselves and the expected value of products of elements in the algebra are the products of the corresponding expected values (which mimicks the statistical independence principle expressed by Eq. (\ref{eq:statistical_independence})), that is:
 
 \begin{definition}\label{def:independence_usual} Let $\mathcal{A}_\iota$ be a family of subalgebras of the noncommutative probability space  $(\mathcal{A}, \rho)$. We will say that they are independent in the usual (or standard) way, if they commute among themselves and:
 $$
 \rho (a_1 \cdots a_n) = \rho (a_1) \cdots \rho (a_n) \, , \qquad  \forall a_k \in \mathcal{A}_{\iota_k}\, , \quad 1 \leq k \leq n \, ,
 $$
 and  $k \neq l$ implies that $\iota_k \neq \iota_l$.   
 \end{definition}
 
 Clearly, this is the situation we would be considering if $\mathcal{A} = \otimes_\iota \mathcal{A}_\iota$, and the state $\rho$ would have the form, $\rho = \otimes_\iota \rho_\iota$, then $\rho (a_1 \cdots a_n) = \rho_{\iota_1}(a_1)\cdots \rho_{\iota_n}(a_n)$. Notice that the subalgebras $\mathcal{A}_\iota$ (identified as subalgebras of $\mathcal{A}$ with the subalgebras $\mathbf{1}_1 \otimes \cdots \otimes \mathcal{A}_\iota \otimes \cdots \otimes \mathbf{1}_n$ in $\mathcal{A}$) commute among themselves.   

The standard interpretation is that elements in $\mathcal{A}$ describe (non-commutative) random variables, then given two non-commutative random variables (or two sets of random variables), their statistical independence corresponds to their commutativity and that the expected values of their products are just the products of their expected values.   

This physical interpretation is drawn from the quantum mechanical interpretation of the $C^*$-algebra of bounded operators on a Hilbert space and a state on it as a non-commutative $C^*$-probability space, the self-adjoint operators $A$ being considered as non-commutative random variables with the corresponding probabilty distributions defined by the measure $\nu_{A, \rho} (d\lambda) = \mathrm{Tr\, }( \boldsymbol{\rho}\, E_A(d\lambda))$, where $E_A(d\lambda)$ is the spectral measure associated to $A$. Then, the statistical independence of the measures $\nu_{A,\rho}$, $\nu_{B,\rho}$ imply that the operators $A$ and $B$ commute.  Note that in such case $\rho (AB) = \rho(A) \rho (B)$.   Again a tensor product interpretation in terms of subsystems can be made (more easily in fact).

The previous intepretation of independence is not by any means the only possible one.  Again, keeping the conceptual framework provided by non-commutative $C^*$-probability spaces, D. Voiculescu introduced a different notion of independence that has proved to be invaluable in addressing many problems in operator algebras, group theory, etc. \cite{Vo92, Vo94} and that is perfectly suited to the non-commutative setting.   

\begin{definition}\label{def:independence_free}
We will say that a family of unital subalgebras  $\mathcal{A}_\iota$ of the non-commutative probability space $(\mathcal{A}, \rho)$ is free independent (or just free) if $\rho (a_1 \cdots a_n) = 0$ whenever $\rho (a_k) = 0$, $1 \leq k \leq n$, and $a_k \in \mathcal{A}_{\iota_k}$ where consecutive indices $\iota_k \neq \iota_{k+1}$ are distinct.  
\end{definition}

A main feature of free independence is that, like in the case of usual independence, is that the state $\rho$ is completely determined by the restrictions $\rho_\iota = \rho|_{\mathcal{A}_\iota}$ (see later, Prop. \ref{prop:char_ind}).

%%%%%%%%%%%%%%%

\subsection{Generalized Independence}

The previous definitions, albeit well established, do not reflect properly Landau's notion of independence of subsystems because a distinguished state $\rho$ is selected which is not specifically related to the given subsystems.    A natural way to proceed was suggested in \cite{Ci19} where the notion of composition and independence was reviewed using minimal assumptions on the structures used in the description of the systems.

In this direction, the first relevant observation is that, in general, we are interested in families of states, that is, given a family of subsystems $\mathbf{G}_i \subset \mathbf{G}$, we would like to consider all states corresponding to each one of the subsystems $\mathbf{G}_i$, that is, following Landau, we will say that the subsystems are statistically independent if the states $\rho_i$ of the subsystem $\mathbf{G}_i$ do not affect the statistical information associated to the states $\rho_j$ of the other subsystems $\mathbf{G}_j$, $i \neq j$.  

Now the second fundamental observation is that, as it was pointed out already in Sect. \ref{sec:subsystems}, if $\rho$ is a state of the system $\mathbf{G}$, then $\rho$ induces a state $\rho_1$ of the subsystem $\mathbf{G}_1$.  The state $\rho_1$ is just the state obtained by restricting $\rho$ to the subalgebra $C^*(\mathbf{G}_1) \subset C^*(\mathbf{G})$ or, alternatively, obtained by restricting the characteristic function (or amplitude) $\varphi$ of the state $\rho$ to the subgroupoid $\mathbf{G}_1 \subset \mathbf{G}$.   However, in general, there is no natural way to extend a state $\rho_1 \colon C^*(\mathbf{G}_1) \to \mathbb{C}$ to the total system.  

In the particular instance that the system is a direct product of groupoids, hence the algebra is a tensor product of unital algebras, we can do that easily, but it is not possible to restrict them to the subsystems unless further assumptions are made.   One  way to overcome this difficulty is by assuming that the states $\mathcal{S}$ of interest to us, that is, those states defined on the algebra spanned by the union of the algebras $\mathcal{A}_i = C^*(\mathbf{G}_i)$ corresponding to the given family of subsystems, are uniquely determined by the states of the separate subsystems.    This can be done, for instance, as suggested in \cite{Ci19}, assuming that there is an injective map $I \colon \mathcal{S}_1 \times \cdots \times \mathcal{S}_n \to \mathcal{S}$.    This, however, could be too restrictive for some purposes and, as it will be shown below, it is preferable to consider a more general approach.  

As it turns out, Voiculescu's notion of free independence is best suited to the specifics of the non-commutative situation once we extend the definition to consider a family of states of interest $\mathcal{S}$.  

\begin{definition}\label{def:independence_gen}
Let $\mathcal{A}_\iota$ a family of unital subalgebras of the unital algebra $\mathcal{A}$ and let $\mathcal{S}$ denote a subset of the space of states of the algebra generated by the union of the algebras $\mathcal{A}_\iota$, called the `states of interest'.   We will say that the subalgebras $\mathcal{A}_\iota$ are free independent with respect to the family $\mathcal{S}$ (or `generalized independent', or just indpendent if there is no risk of confusion) if for any $\rho\in \mathcal{S}$, such that $\rho(a_k) = 0$, $k = 1, \ldots, r$, $a_k \in \mathcal{A}_{\iota_k}$,  then $\rho (a_1\cdots a_r) = 0$, where two consecutive elements, $a_{\iota_k}$, $a_{\iota_{k+1}}$, belong to different subalgebras, $\mathcal{A}_{\iota_k} \neq \mathcal{A}_{\iota_{k+1}}$.   

We will call elements $a = a_1 \ldots a_r$ in the algebra generated by the subalgebras $\mathcal{A}_\iota$ and satisfying the conditions of this definition, i.e., that consecutive elements belong to different subalgebras, reduced elements, and will denote them by $a_\mathrm{red}$ whenever is necessary to emphasize it.
\end{definition}

A few comments are needed at this point.   First, the previous definition can be made more specific by considering for instance $C^*$-algebras or von Neumann algebras.  We will assume in what follows that we restrict our attention to the class relevant in each situation. 

This definition is clearly a natural extension of Voiculescu's notion of free independence, then we could keep this terminology, however we will prefer to call it just `independence' (or generalized independence) because, at is will be shown below, it captures well enough the notion of statistical independence of subsystems we are trying to understand.   

If the algebras $\mathcal{A}_\iota$ are independent in the usual sense, Def. \ref{def:independence_usual}, i.,e, they commute among themselves and the states of interest satisfy $\rho (a_1 \ldots a_r) = \rho(a_1) \ldots \rho (a_r)$, they do not have to be independent in the generalized sense established in Def. \ref{def:independence_gen}.  The reason for that is twofold. On one side, the class of states of interest considered in the definition could be too large.   But even if we restrict our attention to a smaller class of states there may be elements, for instance, of the form $a_1 a_2 a_3 a_4$ where $a_1, a_3 \in \mathcal{A}_1$ and $a_2, a_4 \in \mathcal{A}_2$ for which the usual property of independence is not satisfied while the notion of generalized independence is. For instance, consider two subalgebras $\mathcal{A}_1$ and $\mathcal{A}_2$ such that they are independent in the generalized sense with respect to the state $\rho$. Then  if $\rho(a_i ) = 0$, $i = 1,2,3,4$, $a_1, a_3 \in \mathcal{A}_1$ and $a_2, a_4 \in \mathcal{A}_2$, then $\rho (a_1a_2a_3 a_4) = 0$.  However, if the algebras $\mathcal{A}_1$, $\mathcal{A}_2$ are independent in the usual sense, so they commute, we can rearrange the product $a_1 a_2 a_3 a_4$ as $a_1 a_2 a_3 a_4 = (a_1a_3) (a_2a_4)$ (the elements of $\mathcal{A}_1$ and $\mathcal{A}_2$ commute among themselves) and then $\rho (a_1 a_2 a_3 a_4) =\rho (a_1 a_3) \rho(a_2 a_4)$ because $a_1a_3 \in \mathcal{A}_1$ and $a_2 a_4 \in \mathcal{A}_2$, but it is not true, in general, that $\rho (a_1 a_3) = \rho (a_1) \rho (a_3)$, hence, $\rho (a_1) =\rho (a_3) = 0$, would not imply that $\rho (a_1 a_3) = 0$.  

There is however a partial inverse of the previous statements that hold for generalized independent subalgebras, summarized in the following proposition, that justifies the choice of the definition.

\begin{proposition}\label{prop:char_ind}
 Let $\mathcal{A}_\iota$ be a family of generalized independent unital subalgebras of the unital algebra $\mathcal{A}$. Then the expected values of states of interest $\rho \in \mathcal{S}$ in reduced elements, i.e., $\rho (a_\mathrm{red})$ with $a_\mathrm{red} = a_1 \ldots a_r$ a reduced element, are determined uniquely by the restrictions $\rho_\iota$ of the state $\rho$ to the subalgebras $\mathcal{A}_\iota$.  Moreover, if $a_\mathrm{red} = a_1 \cdots a_r$, is a reduced element such that the factors $a_k\in \mathcal{A}_{a_k}$ are all in \underline{different} subalgebras, then $\rho (a_1 \ldots a_r) = \rho (a_1)\ldots \rho(a_r)$.
\end{proposition}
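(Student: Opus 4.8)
The plan is to establish both claims by the standard "centering" trick used in free probability, now phrased for the generalized notion of independence of Definition~\ref{def:independence_gen}. The key idea is that any product $a_1 \cdots a_r$ with $a_k \in \mathcal{A}_{\iota_k}$ and consecutive indices distinct can be rewritten, using the decomposition $a_k = (a_k - \rho(a_k)\mathbf{1}) + \rho(a_k)\mathbf{1}$, as a sum of terms each of which is either a reduced element all of whose factors are centered (hence has $\rho$-value zero by the definition of generalized independence), or a strictly shorter reduced element (after absorbing the scalars $\rho(a_k)$ and re-collapsing any factors that now fall in the same subalgebra, which is legitimate since $\mathbf{1}\in\mathcal{A}_{\iota}$ for every $\iota$). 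This immediately gives an inductive scheme on the length $r$.

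The first step is the base case: a reduced element of length $r=1$ is just $a_1\in\mathcal{A}_{\iota_1}$, and $\rho(a_1)=\rho_{\iota_1}(a_1)$ is by definition the restriction, so there is nothing to prove. For the inductive step I would expand
$$
a_1\cdots a_r=\prod_{k=1}^r\bigl(\mathring a_k+\rho(a_k)\mathbf 1\bigr),\qquad \mathring a_k:=a_k-\rho(a_k)\mathbf 1,
$$
multiply out, and apply $\rho$. The single term in which every factor is $\mathring a_k$ is a reduced element (same subalgebra pattern, consecutive indices still distinct) with $\rho(\mathring a_k)=0$ for all $k$, so by generalized independence its $\rho$-value is $0$. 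Every other term contains at least one scalar factor $\rho(a_k)\mathbf 1$; pulling the scalars out front leaves a product of fewer than $r$ of the elements $\mathring a_j$ (or $a_j$). After deleting the $\mathbf 1$'s, two formerly non-adjacent factors may now become adjacent with equal index; collapsing their product within that subalgebra produces a genuine reduced element of length $<r$. By the induction hypothesis $\rho$ of each such shorter reduced element is a polynomial in the numbers $\rho_{\iota}(\,\cdot\,)$ evaluated on elements built from the $a_j$ by algebra operations inside the individual $\mathcal{A}_\iota$; hence so is $\rho(a_1\cdots a_r)$, which proves the uniqueness/determinacy claim.

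For the "moreover" part, suppose in addition that all the $\iota_k$ are \emph{pairwise} distinct (not merely consecutively distinct). Running the same expansion, I claim that in every term with at least one scalar factor the residual product, after collapsing, has $\rho$-value equal to the product of the remaining $\rho(a_j)$'s — here the collapsing step is vacuous precisely because no two surviving indices can coincide, so the residual product is itself a reduced element all of whose indices are distinct and one shorter, and the induction hypothesis applies directly. Carefully bookkeeping the signs coming from the $-\rho(a_k)\mathbf 1$ terms, the whole alternating sum telescopes to $\rho(a_1)\cdots\rho(a_r)$; concretely one verifies by induction on $r$ that
$$
\rho(a_1\cdots a_r)=\rho(a_1\cdots a_{r-1})\,\rho(a_r),
$$
using that $a_1\cdots a_{r-1}$ lies in the subalgebra generated by $\mathcal{A}_{\iota_1},\dots,\mathcal{A}_{\iota_{r-1}}$, that $\mathring a_r$ is centered, and that $\mathring a_r$ together with any centered reduced word in the first $r-1$ subalgebras forms a centered reduced word — so by generalized independence $\rho\bigl((a_1\cdots a_{r-1}-\rho(a_1\cdots a_{r-1})\mathbf 1)\,\mathring a_r\bigr)=0$, and expanding this gives the factorization.

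The main obstacle I anticipate is purely combinatorial bookkeeping in the inductive step: when a scalar factor is removed from the middle of the word, one must track which pairs of indices become adjacent and argue that after collapsing one always lands back in the class of reduced elements (this uses $\mathbf 1\in\mathcal{A}_\iota$ and the closure of each $\mathcal{A}_\iota$ under products), and one must make sure the induction is set up on the right quantity — length of the reduced word — so that the argument terminates. None of this is deep, but it is the step where an imprecise formulation would break, so it deserves a clean inductive statement. Everything else (positivity, normalization) is not needed here since the proposition is a purely algebraic consequence of Definition~\ref{def:independence_gen}.
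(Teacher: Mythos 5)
Your proposal is correct and follows essentially the same route as the paper: both proofs center each factor as $a_k-\rho(a_k)\mathbf{1}$, use generalized independence to kill the fully centered term, and induct on the length of the reduced word after re-collapsing adjacent factors that land in the same subalgebra. The only (minor) divergence is in the ``moreover'' part, where your telescoping identity $\rho(a_1\cdots a_r)=\rho(a_1\cdots a_{r-1})\rho(a_r)$ is a slightly cleaner finish than the paper's argument, which derives $\rho(a_{\mathrm{red}})=C\,\rho(a_1)\cdots\rho(a_r)$ for an integer $C$ and then fixes $C=1$ by evaluating on $\mathbf{1}$.
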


\begin{proof}
To prove that $\rho (a_1 \ldots a_n)$ depends on the restrictions $\rho_a$ we proceed by induction on $n$.  If $n =1$, the statements of the proposition are obviously true.   

Now consider $n+1$, $\rho$ a state of interest and $a_\mathrm{red} = a_1 \ldots a_n a_{n+1}$ a reduced element with $n+1$ factors.  
Notice that $\rho (a_k - \rho (a_k) \mathbf{1}) = 0$ for all $k$
Then, because of the definition of independence, we get that:
$$
\rho ((a_1 - \rho(a_1) \mathbf{1})(a_2 - \rho(a_2) \mathbf{1}) \cdots (a_n - \rho(a_n) \mathbf{1}) (a_{n+1} - \rho(a_{n+1}) \mathbf{1})) = 0 \, .
$$
Expanding the products and because of the linearity of $\rho$ we get:
\begin{eqnarray}\label{eq:red}
&& \rho (a_1 \ldots a_n a_{n+1}) = \\ && \sum_{m =1}^n \left((-1)^{n-m} \sum_{1\leq i_1 < i_2 < \cdots < i_m\leq n+1} \rho(a_{j_1}) \ldots \rho (a_{j_{n+1-m}}) \, \rho( a_{i_1} \cdots a_{i_m} ) \right) \, , \nonumber
\end{eqnarray}
where $\{ j_1, \ldots, j_{n+1-m} \}$ is the supplementary set of $\{ i_1, \ldots, i_m \}$ in the set of indices $\{ 1, \ldots, n, n+ 1\}$.
Hence, all terms in the r.h.s. of Eq. (\ref{eq:red}) contain terms of the form  $\rho( a_{i_1} \cdots a_{i_m} )$ with $m < n+1$.   Then, for each term $a_{i_1} \cdots a_{i_m}$, we form the corresponding reduced element $(a_{i_1} \cdots a_{i_m})_\mathrm{red} = a'_{l_1} \cdots a'_{l_m'}$, $m' \leq m$ (that is, if two consecutive elements $a_{i_k}, a_{i_{k+1}}$ are in the same subalgebra, we substitute them by their product $a'_k = a_{i_k}a_{i_{k+1}}$) and then, by the induction hypothesis, $\rho( (a'_{i_1} \cdots a_{i_m})_\mathrm{red} )$ is determined by the restrictions $\rho_a$ of $\rho$ to  the subalgebras $\mathcal{A}_\iota$ and the first assertion is proved.

To prove the second assertion, we observe that if all elements $a_k$ appearing in the reduced element $a_\mathrm{red} = a_1 \cdots a_{n+1}$ belong to different subalgebras, then the terms $a_{i_1} \cdots a_{i_m}$ appearing in the r.h.s. of Eq. (\ref{eq:red}) are automatically reduced too.   Then we can apply to them the same argument an expand the expression in the l.h.s. of the following equation as we did before:
$$
\rho ((a_{i_1} - \rho(a_{i_1}) \mathbf{1})(a_{i_2} - \rho(a_{i_2}) \mathbf{1}) \cdots (a_{i_m} - \rho(a_{i_m})\mathbf{1})) = 0 \, .
$$
Repeating, this process as needed we find that, eventually, all terms in the r.h.s. of Eq. (\ref{eq:red}) have the form: $\rho (a_1) \rho(a_2) \ldots \rho (a_{n+1})$.  Then, we conclude that $\rho (a_\mathrm{red}) = C \rho (a_1) \rho(a_2) \ldots \rho (a_{n+1})$ for some integer coefficient $C$.  But applying the previous formula to the reduced element $\mathbf{1}$ we get $C = 1$ and the proof is completed.
\end{proof}

We have thus identified a general condition on families of subalgebras and states, called general independence, which is genuinely non-commutative and that guarantees Landau's statistical independence on families of elements $a_k$ provided that they belong to different subalgebras.  Note that, in the simpler situation of two subalgebras, for any reduced element we get (\ref{eq:statistical_independence}).

The previous consideration are extremely general, no assumptions on the algebras $\mathcal{A}_\iota$ were made, so we can face the question of when and how to construct independent subalgebras in the groupoids setting.  These will be the arguments of the discussion to follow.

%%%%%%%%%%%%%%%%

\subsection{Independence and composition in the groupoids setting}

As it turns out, if we focus the general discussion on independence in the groupoids setting, we immediately get some interesting results.

Consider a family of subsystems (subgroupoids) $\mathbf{G}_a \rightrightarrows \Omega_a$ of the groupoid $\mathbf{G} \rightrightarrows \Omega$, then we will say that the subsystems $\mathbf{G}_a $ are independent if the corresponding algebras $C^*(\mathbf{G}_a )$ are generalized independent as subalgebras of $C^*(\mathbf{G})$ with respect to a given family of states of interest $\mathcal{S}$.   Notice that because of Prop. \ref{prop:char_ind}, if the subsystems $\mathbf{G}_a \subset \mathbf{G}$ are independent, then the expectation values of the product of observables belonging to different subsystems in any state of interest is the product of the expectation values of the restriction of the state to the corresponding factors.   Moreover the expectation value of any observable, depends solely on the restrictions of the state to the factors (even through, in principle, complicated non-linear expression).  

It may come as a surprise that if the system $\mathbf{G}$ is the free product of the groupoids $\mathbf{G}_a$, then the subsystems $\mathbf{G}_a$ are, in general, not independent with respect to a large class of states of interest, those called factorizable and that, in \cite{Ib18c}, were identified as embodying Feynman's reproducibility and statistical significance \cite{Fe05}.  Let us recall that a a state $\rho$ on the algebra $C^*(\mathbf{G})$ of the groupoid $\mathbf{G}$ with amplitude $\varphi \colon \mathbf{G}$, is called factorizable if $\varphi (\alpha \circ \beta ) = \varphi (\alpha) \varphi (\beta)$, whenever $\alpha$ and $\beta$ are composable.   This is even more striking as it can be easily shown that such states are determined solely by the restriction of their characteristic functions to a generating quiver of the groupoid.  However it is easy to provide simple examples exhibiting the lack of independence of such situation.  Consider for instance a groupoid $\mathbf{G}\colon \Omega$ which is the free product of two groupoids: $\mathbf{G}_1 \rightrightarrows \Omega_1$, $\Omega_1 = \{ -1, 0\}$, $\mathbf{G}_1 = \{ 1_-, 1_0, \alpha_1 , \alpha_1^{-1} \}$, and $\mathbf{G}_2 \rightrightarrows \Omega_2$, $\Omega_2 = \{ 0, 1\}$, $\mathbf{G}_2 = \{ 1_0, 1_+, \alpha_2 , \alpha_2^{-1} \}$. Then consider the factorizable state defined by $\rho_(\alpha_1) = \rho (\alpha_2) = e^{is}$.  Finally consider the elements $\boldsymbol{a}_1 = - 2 1_- + \alpha_1 + \alpha_1^{-1} \in C^*(\mathbf{G}_1)$, and $\boldsymbol{a}_2 = - 2 1_+ + \alpha_2 + \alpha_2^{-1} \in C^*(\mathbf{G}_2)$.   Clearly, $\rho(\boldsymbol{a}_1 ) = \rho(\boldsymbol{a}_2 ) =0$, bbut $\rho (\rho(\boldsymbol{a}_1 )\cdot \rho(\boldsymbol{a}_2 ) = 1$.   

The interpretation of this fact, is that even if the free product is the `freest' groupoid you may construct out of a family of them (in the sense that no additional algebraic relations among the elements of them are introduced), the composability condition is an `intrinsic' condition that relates the different subsystems.   So, even for the states which are more prone to be independent, those we have termed factorizable, the free product is not (statistically) independent.

If we replace the free composition by the direct composition though, we get instead that the family of states of interest are those called separable, that is, those states which are direct products of states naturally induced from the states of the factors:

\begin{theorem}
Let $\mathbf{G}_a \rightrightarrows \Omega_a$  be a family of groupoids defining quantum physical systems.   Consider the system $\mathbf{G} = \prod_a \mathbf{G}_a$ defined by the direct product of them. The algebra of $\mathbf{G}$ is the tensor product of the algebras $C^*(\mathbf{G}_a)$.  Then the subsystems defined by the subgroupoids $\widetilde{\mathbf{G}}_a = \mathbf{G}_a \times \Omega'_a$, $\Omega_a' = \bigcup_{b \neq a} \Omega_b$,  are statistically independent with respect to the family of separable states.
\end{theorem}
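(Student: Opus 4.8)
The plan is to turn the statement into a computation inside the tensor-product algebra $C^{*}(\mathbf{G})=\bigotimes_{a}C^{*}(\mathbf{G}_{a})$ and to verify directly the generalized independence condition of Definition~\ref{def:independence_gen} for separable states; Landau's relation \eqref{eq:statistical_independence} will then follow from Proposition~\ref{prop:char_ind}. I would work first in the discrete case, and for the bookkeeping assume the $\mathbf{G}_{a}$ finite, the general case being recovered by the direct-integral substitutions already used for subgroupoid algebras in Section~\ref{sec:subsystems}. The first step is to identify, as in Section~\ref{sec:direct},
$$
\mathcal{A}_{a}:=C^{*}(\widetilde{\mathbf{G}}_{a})=C^{*}(\mathbf{G}_{a})\otimes\bigotimes_{b\neq a}C^{*}(\boldsymbol{\Omega}_{b})=C^{*}(\mathbf{G}_{a})\otimes\bigotimes_{b\neq a}\mathbb{C}^{n_{b}}\subset C^{*}(\mathbf{G}),
$$
and to record that every $\boldsymbol{a}\in\mathcal{A}_{a}$ decomposes uniquely along the orthogonal idempotents $\mathbf{1}_{x_{b}}$, $x_{b}\in\Omega_{b}$, of the spectator slots as $\boldsymbol{a}=\sum_{\vec{m}}\xi^{a}_{\vec{m}}\otimes\bigl(\bigotimes_{b\neq a}\mathbf{1}_{m_{b}}\bigr)$, with $\vec{m}\in\prod_{b\neq a}\Omega_{b}$ and $\xi^{a}_{\vec{m}}\in C^{*}(\mathbf{G}_{a})$ sitting in the $a$-th slot. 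By the characteristic-function description of separable states in Section~\ref{sec:universal} it is enough to treat a homogeneous separable state $\rho=\bigotimes_{a}\rho_{a}$ and then pass to convex combinations.

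The core of the argument is the following computation. Take a reduced element $\boldsymbol{a}_{\mathrm{red}}=\boldsymbol{a}_{1}\cdots\boldsymbol{a}_{r}$ with $\boldsymbol{a}_{k}\in\mathcal{A}_{\iota_{k}}$, $\iota_{k}\neq\iota_{k+1}$, and $\rho(\boldsymbol{a}_{k})=0$ for all $k$; substitute the idempotent decomposition of each letter and expand the product. Since multiplication in $\bigotimes_{a}C^{*}(\mathbf{G}_{a})$ is performed slot by slot, in the $b$-th slot one obtains an ordered product whose factors coming from letters $\boldsymbol{a}_{k}$ with $\iota_{k}\neq b$ are the idempotents $\mathbf{1}_{m_{b}}$; the relations $\mathbf{1}_{x}\mathbf{1}_{y}=\delta_{xy}\mathbf{1}_{x}$ collapse the multi-index sums and force all spectator labels carried by the word to agree. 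This is the algebraic shadow of the fact that a direct product carries no transition mixing the factors, and it is the combinatorial heart of the proof. Evaluating $\rho=\bigotimes_{a}\rho_{a}$ on the surviving terms should then exhibit, in each of them, the single-letter expectations $\rho(\boldsymbol{a}_{k})$ as factors (this is where the computation has to be carried out with care), so the hypotheses $\rho(\boldsymbol{a}_{k})=0$ give $\rho(\boldsymbol{a}_{1}\cdots\boldsymbol{a}_{r})=0$. With generalized independence in hand, Proposition~\ref{prop:char_ind} yields the factorization of $\rho$ on products of letters in pairwise different $\mathcal{A}_{a}$, i.e. \eqref{eq:statistical_independence}; it then remains to extend to convex combinations of homogeneous separable states, and to the non-discrete case by replacing $\mathbb{C}^{n_{b}}$ with $L^{\infty}(\Omega_{b})$ and finite sums by integrals, the infinite-family case following by inductive limits.

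The delicate point, which I expect to be the real obstacle, is that all the $\mathcal{A}_{a}$ contain the same large ``diagonal'' subalgebra $\mathbf{D}=\bigotimes_{b}C^{*}(\boldsymbol{\Omega}_{b})$: this is exactly why the $\mathcal{A}_{a}$ are \emph{not} independent in the usual sense of Definition~\ref{def:independence_usual} and why the conclusion cannot be read off from a plain tensor factorization. Making the slot-by-slot collapse interact correctly with a separable state is precisely what controls this common diagonal, and it is this interaction that pins down which separable states the statement genuinely needs (in effect the $\widetilde{\mathbf{G}}_{a}$ behave as a free family amalgamated over $\mathbf{D}$). I would therefore take care, at the end, to isolate the unconditional form of the conclusion --- that $\rho(\boldsymbol{a}_{1}\cdots\boldsymbol{a}_{r})$ vanishes whenever the letters $\boldsymbol{a}_{k}$ genuinely belong to the factors and not merely to $\mathbf{D}$ --- since that is the version used in the discussion of the EPRB experiment in Section~\ref{sec:eprb}.
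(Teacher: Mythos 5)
The paper states this theorem without proof, so there is nothing to compare your route against; judged on its own, your argument breaks down exactly at the step you flag as ``to be carried out with care,'' and that step cannot in fact be completed. The slot-by-slot collapse via $\mathbf{1}_{x}\mathbf{1}_{y}=\delta_{xy}\mathbf{1}_{x}$ is correct, but the surviving terms do not contain the single-letter expectations as factors. For two factors and a homogeneous separable state $\rho=\rho_{1}\otimes\rho_{2}$, writing $a=\sum_{m}\xi_{m}\otimes\mathbf{1}_{m}\in\mathcal{A}_{1}$ and $b=\sum_{n}\mathbf{1}_{n}\otimes\eta_{n}\in\mathcal{A}_{2}$, the expansion gives
$$
\rho(ab)=\sum_{m,n}\rho_{1}(\xi_{m}\mathbf{1}_{n})\,\rho_{2}(\mathbf{1}_{m}\eta_{n})\,,
$$
which is not controlled by $\rho(a)=\sum_{m}\rho_{1}(\xi_{m})\rho_{2}(\mathbf{1}_{m})$ and $\rho(b)=\sum_{n}\rho_{1}(\mathbf{1}_{n})\rho_{2}(\eta_{n})$. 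Concretely, take $\mathbf{G}_{1}=\mathbf{G}_{2}=\mathbf{A}_{2}$, let $\rho_{1}=\rho_{2}$ be the normalized trace on $M_{2}(\mathbb{C})$, and set $a=\mathbf{1}_{+}\otimes(\mathbf{1}_{+}-\mathbf{1}_{-})\in\mathcal{A}_{1}$, $b=(\mathbf{1}_{+}-\mathbf{1}_{-})\otimes\mathbf{1}_{+}\in\mathcal{A}_{2}$. Then $\rho(a)=\rho(b)=0$ while $ab=\mathbf{1}_{+}\otimes\mathbf{1}_{+}$ and $\rho(ab)=1/4$. So the condition of Definition~\ref{def:independence_gen} fails for the subalgebras $C^{*}(\widetilde{\mathbf{G}}_{a})$ even with respect to a \emph{homogeneous} separable state, and no amount of care in the expansion will produce the claimed vanishing. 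The obstruction is precisely the common diagonal $\mathbf{D}$ you identify at the end: at that level the true statement is one of independence (indeed freeness) with amalgamation over $\mathbf{D}$, i.e.\ a statement about the conditional expectation onto $\mathbf{D}$, not about the scalar state.

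Two auxiliary steps would also fail even if the core computation went through. First, ``passing to convex combinations'' is not available: neither Definition~\ref{def:independence_gen} nor the factorization \eqref{eq:statistical_independence} is a convex condition in $\rho$, and the classically correlated separable state $\tfrac{1}{2}\rho_{+}\otimes\rho_{+}+\tfrac{1}{2}\rho_{-}\otimes\rho_{-}$ already violates \eqref{eq:statistical_independence} for $a=(\mathbf{1}_{+}-\mathbf{1}_{-})\otimes\mathbf{1}_{2}$ and $b=\mathbf{1}_{1}\otimes(\mathbf{1}_{+}-\mathbf{1}_{-})$, which are genuine tensor-factor elements. Second, the version that does hold trivially --- factorization of a homogeneous separable state on products of elements of the form $\xi\otimes\mathbf{1}_{2}$ and $\mathbf{1}_{1}\otimes\eta$ --- concerns the subalgebras $C^{*}(\mathbf{G}_{a})\otimes\mathbf{1}_{2}$ and $\mathbf{1}_{1}\otimes C^{*}(\mathbf{G}_{b})$, which the paper itself insists are \emph{not} algebras of subgroupoids; so it does not establish the theorem as phrased about the $\widetilde{\mathbf{G}}_{a}$. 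Your closing suggestion to isolate an ``unconditional form'' for letters genuinely in the factors is the right instinct, but it replaces the theorem by a strictly weaker statement rather than proving it; to prove something about the $\widetilde{\mathbf{G}}_{a}$ themselves you would have to reformulate independence relative to the conditional expectation onto $\mathbf{D}$.
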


%%%%%%%%%%%%%%%%%%%%%%%%%%%%%%%%%%
%%%%%%%%%%%%%%%%%%%%%%%%%%%%%%%%%%

\section{The groupoids interpretation of the EPRB experiment}\label{EPRB_experiment}

We will finish this work by analyzing the Einsteiin-Podolski-Rosen-Bohm (EPRB) experiment using the notions developed so far.   

\subsection{The EPRB experiment}\label{sec:eprb}

The EPRB experiment, originally proposed by Einstein-Podolski-Rosen in \cite{Ei35} and subsequently simplified by Bohm \cite{Bo51}, was aimed to show that the standard description of Quantum Mechanics was (logically) incomplete, so that a hidden variables interpretation (for instance) was necessary to restore its logical consistency.  

The  core of the argument (see for instance \cite[Chap. 16.1]{Pa09} for an expository description of the subject) consists of exhibiting  `elements of physical reality' which are present in a given physical system but that contradict its quantum mechanical interpretation, where by `elements of physical reality' it is understood that \cite{Ei35}: `\textit{If, without in any way disturbing a system, we can predict with probability equal to unity the value of a physical quantity, then, independently from our measurement procedure, there exists an element of the physical reality corresponding to this physical quantity}', 

The  system could consist of two particles of spin 1/2 that are in a state in which the total spin is zero as proposed by D. Bohm \cite{Bo51} in an important simplification of the original setting that was not only clarifying the original argument, but because it is also experimentally feasible.  

Thus the system will consist of two particles of spin 1/2 (see Fig. \ref{fig:eprb}) that are in a singlet state (of total spin 0). The particles can be produced by a single particle decay and, after a certain amount of time $T$ the two resulting particles, labelled 1 and 2, are spatially separated, say by a distance $\Delta$, and they do not interact any longer.  If the system is not disturbed, the conservation of angular momentum guarantees that they remain in a singlet state  which can be written as:
$$
| \Psi_\mathrm{EPRB} \rangle = \frac{1}{\sqrt{2}} \left( |+_1 \rangle | -_2 \rangle - | -_1 \rangle | +_2 \rangle  \right) \, ,
$$
where $| \pm_a \rangle$ represent the eigenstates of the projection of the spin operators $\mathbf{s}_a$, $a = 1,2$, along the $z$-axis.
Now, if an observation of the spin component along the $z$-axis of particle 1 gives the outcome +1/2, that of particle 2 along the same direction, performed after a time $\tau < \Delta/c$, will necessarily give the outcome $-1/2$ and viceversa.  This result will remain true for the three components of spin (as the singlet state is rotationally invariant), hence we should attribute elements of physical reality to all components of spin that, because they do not commute, cannot be considered as such.   

In a remarkable series of papers \cite{Sc35} Schr\"odinger criticised the EPR experiment by observing that the classical separability principle, i.e., that the only form of interdependence between systems is the dynamical causal form of interaction, that is, caused by dynamical local interactions, is not any longer applicable in quantum mechanics and that in some situations, the state of the system prevents us from attributing properties to its subsystems. Schr\"odinger called this property `entanglement' and it constitutes until today one of the most significant features of quantum mechanics.

\begin{figure}[h]
\centering
\includegraphics[width=12cm]{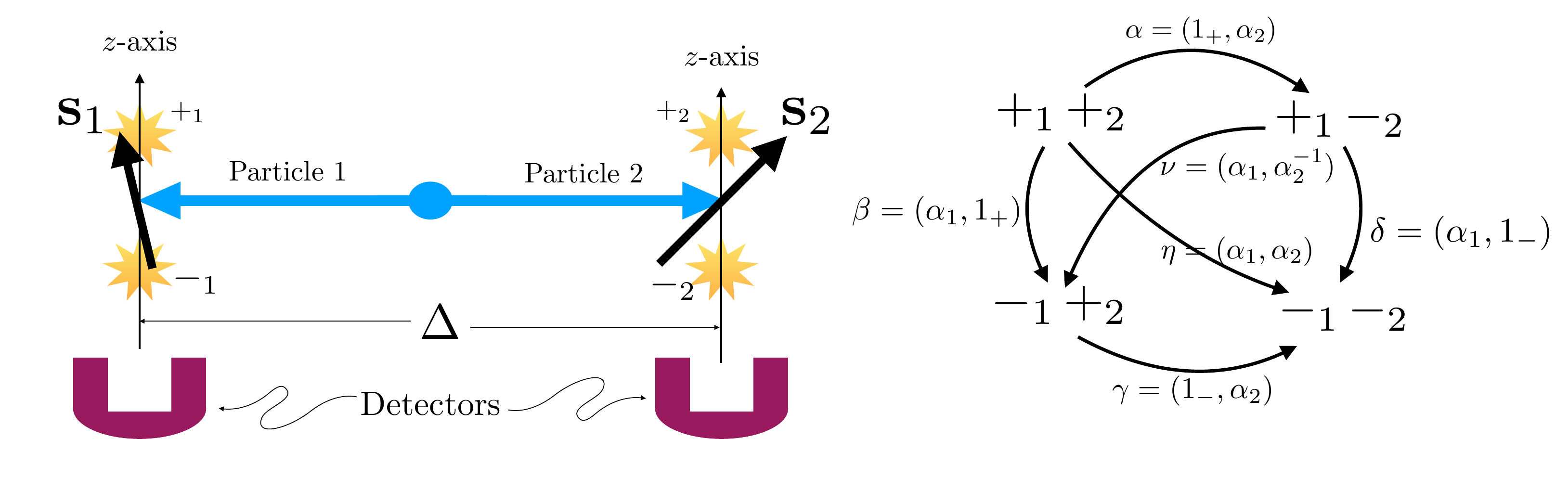} 
\caption{The EPRB experiment (left). The groupoid diagram of the EPRB experiment (the inverses $\alpha^{-1}, \beta^{-1}$, etc., and the units $1_{++},1_{+-}$, etc., are missing): $\mathbf{G}_\mathrm{EPRB} \cong \mathbf{A}_2 \times \mathbf{A}_2$ (right).}
\label{fig:eprb}
\end{figure}

In what follows we will offer an analysis of the EPRB experiment using the framework provided by the groupoid picture of quantum mechanics. One of the salient features of this analysis will consist of showing that not only the entanglement of the state describing the state of the two particles of spin 1/2 is responsible for their interdependence but that it does not make sense of talking of them as subsystems of the total system, hence of considering the total system as a direct product of individual systems.

\subsection{Groupoids and the EPRB experiment}

The groupoids picture of the EPRB system is given by the possible outcomes of the detectors of spin $z$-component.   Their outcomes are collected as pairs $(\pm_1,\pm_2)$, thus the space of outcomes $\Omega$ consists of four outcomes denoted as $+_1+_2$, $+_1-_2$, $-_1+_2$ and $-_1-_2$, respectively (see Fig. \ref{fig:eprb} right).   The groupoid picture will be completed by establishing the possible physical transitions among the outcomes, so that in the simplest possible scheme we can consider the following ones $\alpha \colon +_1+_2 \to +_1 -_2$, $\beta \colon +_1+_2 \to -_1+_2$, $\gamma \colon -_1+_2 \to -_1-_2$, $\delta \colon +_1-_2 \to -_1 -_2$, $\eta \colon +_1+_2 \to -_1-_2$, $\nu\colon +_1-_2 \to -_1+_2$, and their respective inverses together with the corresponding units $1_{++}$, $1_{+-}$, $1_{-+}$ and $1_{--}$.   Thus, for instance $\alpha^{-1} \circ \alpha = 1_{++}$, $\delta \circ \alpha = \eta$, $\nu = \alpha \circ \beta^{-1} = \gamma^{-1} \circ \delta$, etc.

Summarizing, the groupoid $\mathbf{G}_\mathrm{EPRB} \rightrightarrows \Omega$ describing the EPRB experiment in its simple idealized way will consist of 16 transitions: 
$$
\mathbf{G}_\mathrm{EPRB} = \{1_{++}, 1_{+-}, 1_{-+},1_{--}, \alpha, \beta, \gamma, \delta, \eta, \nu,  \alpha^{-1} , \beta^{-1} , \gamma^{-1} , \delta^{-1} , \eta^{-1} , \nu^{-1} \} \, ,
$$ 
with the corresponding source and target maps and the composition law derived from the diagram in Fig. \ref{fig:eprb} (right).  

The groupoid algebra $\mathbb{C}[\mathbf{G}_\mathrm{EPRB}]$ of this system will be 16 dimensional and isomorphic to the space of $4\times 4$ matrices $M_{16}(\mathbb{C})$.    One way to establish this isomorphism is by considering the fundamental representation $\pi_0 \colon \mathbb{C}[\mathbf{G}_\mathrm{EPRB}] \to \mathcal{B}(\mathcal{H}_\Omega)$, where $\mathcal{H}_\Omega$ is the fundamental Hilbert space generated by the outcomes $(\pm_1,\pm_2)$, $a = 1,2$ of dimension 4.   Denoting the vectors in such space as $|\pm_1\pm_2 \rangle$, the fundamental representation is defined as (see \cite{Ib18b} for details) as:
$\pi_0(\alpha)|+_1+_2\rangle = |+_1-_2\rangle$ and zero otherwise (similarly for all other transitions).  Denoting by $[\alpha ]$, $[ 1_{+-}]$, etc.,  the matrices associated to the corresponding transitions in the given orthonormal basis $\{ |+_1 +_2\rangle, |+_1 -_2\rangle, |-_1 +_2\rangle, |-_1 -_2\rangle \}$ we get, for instance:
$$
[\alpha ] = \left[\begin{array}{cccc} 0 & 0 & 0 & 0 \\  1 & 0 & 0 & 0 \\ 0 & 0 & 0 & 0 \\ 0 & 0 & 0 & 0\end{array} \right] \, , \qquad [1_{-+} ] = \left[\begin{array}{cccc} 0 & 0 & 0 & 0 \\  0 & 0 & 0 & 0 \\ 0 & 0 & 1 & 0 \\ 0 & 0 & 0 & 0\end{array} \right] \, , 
$$
etc.
A state $\rho \colon \mathbb{C}[\mathbf{G}_\mathrm{EPRB}] \to \mathbb{C}$ can be specified by defining its values on the elements of $\mathbf{G}_\mathrm{EPRB}$, thus for instance we may define the following state:
$$
\rho_0 (1_{+-}) = \rho_0(1_{-+}) = 1/2 \, , \qquad \rho_0(\nu) = \rho_0(\nu^{-1}) = -1/2 \, , 
$$
and zero in all other transitions, that is $\rho_0(1_{++}) = \rho_0(1_{--}) = \rho_0(\alpha) = \rho_0(\beta) = \ldots = \rho_0(\eta^{-1}) = 0$.  Notice that $\rho_0 (\mathbf{1}) = 1$, where the identity element of the groupoid algebra is $\mathbf{1} = 1_{++} + 1_{+-} + 1_{-+} + 1_{--}$.    The function $\varphi_0 \colon \mathbf{G}_\mathrm{EPRB} \to \mathbb{C}$ thus defined on the groupoid is a positive
semidefinite function characterizing the state (see \cite{Ib18c} for details).  

Two  physical interpretations of the  state $\rho_0$ can be offered.  The first. more direct one, is that the expected value of the observable defined by the element $1_{+-}$, for instance, is $1/2$; $-1$ for the observable $\boldsymbol{a} = \nu + \nu^{-1}$ (notice that $\boldsymbol{a}^* = \boldsymbol{a}$), etc.     Sorkin's quantum measure allows for another interpretation (that complements the previous one).  The state $\rho_0$, by means of its characteristic function $\varphi_0$, determines a decoherence functional $D$:
$$
D(\mathbf{A},\mathbf{B}) = \sum_{\sigma\in \mathbf{A}, \sigma'\in \mathbf{B}} \varphi_0(\sigma^{-1}\circ \sigma') \, ,
$$
and, a quantum measure $\mu(\mathbf{A} ) = D(\mathbf{A},\mathbf{A})$, for any measurable set $A \subset \mathbf{G}_\mathrm{EPR}$ \cite{Ib18c}, called `events' in Sorkin's terminology.   Thus, given an outcome, for instance $+_1+_2$, we consider the quantum measure of the event determining it, that is, all transitions ending on it, $\mathbf{A} = \{ \sigma \colon x \to +_1+
_2 \} = \{ 1_{++}, \alpha^{-1} , \beta^{-1}, \eta^{-1}\}$.   Then:
\begin{eqnarray*}
\mu (\mathbf{A}) &=& \sum_{\sigma, \sigma'\in \mathbf{A}} \varphi_0(\sigma^{-1}\circ \sigma') = \varphi_0(\alpha\circ \alpha^{-1}) + \varphi_0(\beta\circ \beta^{-1}) + \\ &+& \varphi_0(\alpha\circ \beta^{-1}) + \varphi_0(\beta \circ \alpha^{-1})  = 1/2 + 1/2 - 1/2 -1/2 = 0 \, ,
\end{eqnarray*}
where only the non-vanishing terms, i.e.,  those terms such that $\sigma^{-1}\circ \sigma' = 1_{+-}, 1_{-+}, \nu$, or $\nu^{-1}$,  have been written in the r.h.s. of the previous equation.
Hence, according to Sorkin's interpretation, the event $\mathbf{A}$, that is the firing of both detectors with spin $z$ component equal to $+1/2$, is precluded and it cannot happen (compare with the analysis of the two-slit experiment in \cite{Ib18c}).  

On the other hand, we may consider the event corresponding to the outcomes $-_1+_2$, that is, $\mathbf{B} = \{ 1_{-+}, \beta, \nu, \gamma^{-1},  \}$.  Note that this is the event indicating that the the spin $z$ component of the particle 1 is $- 1/2$ while that of particle 2 is $1/2$ respectively.  Then a simple computation shows that:
\begin{eqnarray*}
\mu (\mathbf{B}) &=& \sum_{\sigma, \sigma'\in \mathbf{B}} \varphi_0(\sigma^{-1}\circ \sigma') = \varphi_0(1_{-+})  +   \varphi_0(\nu^{-1} \circ \nu) + \\ &+& \varphi_0(1_{-+}\circ \nu) = 1/2 + 1/2  -1/2 = 1/2 \, .
\end{eqnarray*}
Similarly we may have considered the outcomes $+_1-_2$, i.e., $\mathbf{C} = \{ 1_{+-}, \alpha, \nu^{-1}, \delta^{-1}  \}$, and we get again $\mu (\mathbf{C}) = 1/2$ and the addition of both values is $\mu (\mathbf{B}) + \mu (\mathbf{C}) = 1$.

If we wish to obtain an interpretation of all previous facts in terms of familiar Hilbert spaces and vectors, we could use the GNS representation associated to the state $\rho_0$, however it is more convenient to consider the fundamental representation $\pi_0$ again, and we see that the state $\rho_0$ is the state corresponding to the vector $|\psi_\mathrm{EPRB}\rangle = \frac{1}{\sqrt{2}} \left( |+_1-_2 \rangle - |-_1+_2 \rangle  \right) \in \mathcal{H}_\Omega$, that is:
$$
\rho_0(\sigma ) = \langle\Psi_\mathrm{EPRB} |\pi_0(\sigma) |\Psi_\mathrm{EPRB} \rangle \, , \qquad \sigma \in \mathbf{G}_\mathrm{EPRB} \, ,
$$
and then we can recover the usual language of quantum mechanics with amplitudes, probabilities, etc.

\subsection{Separability of the EPRB system}

It is relevant to remark that in all previous discussion, no mention or use of the `subsystems' defined by the individual particles has been made.   It is very tempting to interpret the total space of outcomes $\Omega$ as the Cartesian product of the spaces of outcomes of the individual detectors, that is $\Omega = \Omega_1 \times \Omega_2$, $\Omega_1 = \{ +_1, -_1\}$ and $\Omega_2 = \{ +_2, -_2\}$.   In doing so, the fundamental Hilbert space $\mathcal{H}_\Omega$ can be identified with the tensor product $\mathcal{H}_{\Omega_1} \otimes \mathcal{H}_{\Omega_2}$, with a basis given by the vectors $|\pm_1 \rangle \otimes |\pm_2\rangle$, that can be identified with the vectors $|\pm_1, \pm_2 \rangle$.       It is also clear that the groupoid $\mathbf{G}_\mathrm{EPRB}
$ is isomorphic to the direct product of the groupoids of pairs $\mathbf{A}_2$ of the spaces of outcomes $\Omega_a$, $a = 1,2$.  Thus for the outcomes corresponding to the detector 1, we will have the groupoid $\mathbf{G}_1 \rightrightarrows \Omega_1$ with transitions $\alpha_1 \colon +_1 \to -_1$, $\alpha^{-1} \colon -_1 \to +_1$ and the units $1_+$, $1_-$, and accordingly for the outcomes corresponding to the dectector 2. Thus, $\mathbf{G}_\mathrm{EPRB} \cong \mathbf{G}_1 \times \mathbf{G}_2$ and we can identify the transitions in $\mathbf{G}_\mathrm{EPRB}$ with pairs of transitions in $\mathbf{G}_1$ and $\mathbf{G}_2$ respectively, thus $\alpha = (1_+, \alpha_2)$, $\beta = (\alpha_1, 1_+)$, etc. (see Fig. \ref{fig:eprb} right).  Then, we use the discussion in Sect. \ref{sec:direct} and we identify the groupoid algebra $\mathbb{C}[\mathbf{G}_\mathrm{EPRB}]$ with the tensor product of algebras $\mathbb{C}[\mathbf{A}_2]\otimes \mathbb{C}[\mathbf{A}_2] \cong M_2(\mathbb{C}) \otimes M_2(\mathbb{C})$. The fundamental representation $\pi_0$ is the tensor product of the fundamental representations of the groupoids $\mathbf{A}_2$ (which is just the standard defining representation of $M_2(\mathbb{C})$ on $\mathbb{C}^2$ and we recover the well known mathematical formalism.  

The almost inevitable risk in doing so is that such point of view seems to imply that   each particle are well defined as individual systems and that they are subsystems of the total system.  This is simply not true.   As discussed in Sect. \ref{sec:direct},  each one of the algebras of the groupoids $\mathbf{G}_a$, $a = 1,2$, can be thought of as subalgebras of the total algebra by mapping each element $\boldsymbol{a}_1 \in \mathbb{C}[\mathbf{G}_1]\mapsto \boldsymbol{a}_1 \otimes I_2 \in \mathbb{C}[\mathbf{G}_\mathrm{EPRB}]\cong M_2(\mathbb{C}) \otimes M_2(\mathbb{C})$ (and similarly for $\boldsymbol{a}_2\in \mathbb{C}[\mathbf{G}_2]$).  However these subalgebras   are not physical subsystems as they do not retain the physical properties of the original system manifested by the transitions and outcomes of the total system.  As it was discussed extensively in 
Sect. \ref{sec:subsystems}, a physical subsystem corresponds to the mathematical notion of subgroupoid and the groupoids $\mathbf{G}_a$, $a = 1,2$ are not subgroupoids of the direct product groupoid $\mathbf{G}_\mathrm{EPRB} \cong \mathbf{G}_1 \times \mathbf{G}_2$.    We can understand now this situation better in the context of our experiment.   The most natural way of selecting a subgroupoid related to particle 1 say, is by considering the subgroupoid $\widetilde{\mathbf{G}}_1 = \mathbf{G}_1 \times \boldsymbol{\Omega}_2$, i.e., the pairs $(1_+, 1_+), (\alpha_1, 1_+), (\alpha_1^{-1}, 1_+), (1_-,1_+)$ and the same with $1_-$ in the second component (note that the subgroupoid $\widetilde{\mathbf{G}}_1$ has order 8), i.e., the subgroupoid $\widetilde{\mathbf{G}}_1$ takes into account that we have a subsystem of the total system by using all possible outcomes of the second detector as an `idle' component.   The same argument applies to the second particle and its detector.

What is important to realize now is that the algebras of the subgroupoids $\widetilde{\mathbf{G}}_1$ and $\widetilde{\mathbf{G}}_2$ do not commute anymore as subalgebras of the algebra of $\mathbf{G}_\mathrm{EPRB}$ (note, for instance, that $\alpha = (\alpha_1, 1_+) \cdot (1_+,1_+ ) \neq  (1_+,1_+ ) \cdot (\alpha_1, 1_+) = 0$).  Thus the subalgebras corresponding to the two subgroupoids are not independent in the usual sense (Def. \ref{def:independence_usual}).   But, even more, they are not independent in the generalized sense either.  For instance we get: 
$$
\rho_0 ((1_+,\alpha_2^{-1})\cdot (\alpha_1, 1_+) ) = \rho_0 (\alpha^{-1}\cdot \beta ) = \rho_0 (\nu) = 1/2\, ,
$$
while $\rho_0 ((1_+,\alpha_2^{-1})) = \rho_0(\alpha^{-1}) = 0$ and $\rho_0((\alpha_1, 1_+)) = \rho_0 (\beta ) = 0$ contradicting Def. \ref{def:independence_gen}.  Thus there is not a reasonable way in which the two particles of the experiment can be considered independent systems and this impossibility of decomposing the system into two independent subsystems is previous to the entangled character of the state $\rho_0$ defining it.  In this vein we may add that the restriction of the state $\rho_0$ to the two subsystems $\widetilde{\mathbf{G}}_a$. $a = 1,2$ is well defined. They define states $\rho_a = \rho_0\mid_{\widetilde{\mathbf{G}}_a}$, whose only non-zero contributions are given by $\rho_a(1_{+-}) = \rho_a(1_{-+}) = 1/2$ and they correspond to classical states with distributions $(0,1/2,1/2, 0)$ on the space $\Omega = \{ ++, +-,-+,--\}$.

We will end this discussion by observing that even the description of the system as a direct product  $\mathbf{G}_1 \times \mathbf{G}_2$ is uncertain.   We may recall from the discussion in Sect. \ref{sec:direct} that such naive interpretation of Lieb-Yngvason definition of compound system requires that the pairs of outcomes $(x_1,x_2)$ corresponding to the two systems should be determined simultaneously which, because of the spatial separation among the particles at the moment that the analysis is performed, is not the case in our experimental setup.  We will thus conclude that any attempt to discuss the nature of the system after a spatial separation between the given particles has been reached requires further analysis (even if the initial configuration was a direct product) that should be provided involving a quantum field theoretical treatment.

%%%%%%%%%%%%%

\section{Conclusions and discussion}

The discussion of the concept of subsystem of a quantum system described in the groupoids picture shows that it corresponds to the notion subgroupoid.  After discussing their basic properties the notion of composition has been analyzed.  Two main composition mechanisms are explored corresponding to the standard notion of direct product and of the free product of groupoids, that extends the notion of free product of groups.    These are shown to capture the two universal properties of groupoids (physical systems) with respect to immersions (subsystems) and to projections (reduction).

Using these notions the problem of determining the independence of two subsystems has been addressed showing that the two natural notions of independence which are natural in the domain of non-commutative probability theory, can be extended nicely to the setting of quantum systems described by groupoids.   The notion of generalized independence with respect to a family of states of interest is introduced and it is shown that the direct product of groupoids equipped with the natural class of states defined on the factors are shown to define independent systems with respect to the family of separable states.

In spite of this, it is shown that the direct product of groupoids exhibits a deeper `non-separability' property and that there is not a unique and/or natural way to consider the factor groupoids as subgroupoids, or subsystems, of the direct product.  This leads to the observation that in the physical interpretation of experiments like the EPRB experiment, beyond the entangled character of the state of the system, there is an intrinsic non-separabilty property attached to the physical structure of the compound physical system.

Obviously similar considerations can be extended to other debatable aspects of the foundations of Quantum Mechanics like the interpretation of the Schr\"odinger's cat experiment.  This and other relevant experiments will be discussed in a forthcoming work.

%%%%%%%%%%%%%%%
%%%%%%%%%%%%%%%

\section*{Acknowledgements}
The authors acknowledge financial support from the Spanish Ministry of Economy and Competitiveness, through the Severo Ochoa Programme for Centres of Excellence in RD (SEV-2015/0554).
AI and FdC would like to thank partial support provided by the MINECO research project  MTM2017-84098-P  and QUITEMAD+, S2013/ICE-2801.   GM would like to thank partial financial support provided by the Santander/UC3M Excellence  Chair Program 2019/2020.

%%%%%%%%%%%%%%%
%%%%%%%%%%%%%%%

\newpage


\begin{thebibliography}{999}

\bibitem[Ab08]{Ab08} S. Abramsky. B. Coecke. \textit{Categorical quantum mechanics}. In Handbook of Quantum Logic and Quantum
Structures: Quantum Logic; Elsevier Science: New York, NY, USA, 2008; pp. 261--324.
%
\bibitem[Au09]{Pa09} G. Auletta, M. Fortunato, G. Parisi.  \textit{Quantum Mechanics}, Cambridge Univ. Press, University Printing House, Cambridge CB2 8BS UK  (2009).
%
\bibitem[Bo51]{Bo51} D. Bohm, \textit{Quantum Theory}, New York: Prentice-Hall (1951).
%
\bibitem[Chi19]{Chi19}  G. Chiribella. \textit{Agents, Subsystems, and the Conservation of Information}. Entropy, \textbf{20}, 358 (2018).
%
\bibitem[Ib20]{Ib20}  F. M. Ciaglia, F. Di Cosmo, A. Ibort, G. Marmo. \textit{Schwinger's Picture of Quantum Mechanics}. \emph{Int. J. Geom. Meth. Modern Phys.} (2020).
%
\bibitem[Ci18]{Ci18} F. M. Ciaglia, A. Ibort, G. Marmo. A gentle introduction to Schwinger's picture of quantum mechanics: the groupoid picture. \textit{Mod. Phys. Lett. A}, \textbf{33}(20): 1850122 (2018).  

\bibitem[Ci19a]{Ib18a} F. M. Ciaglia, A. Ibort, G. Marmo. Schwinger's Picture of Quantum Mechanics I: Groupoids.  \emph{Int. J. Geom. Meth. Modern Phys.}, \textbf{16}(8):1950119 (2019).  
%
\bibitem[Ci19b]{Ib18b}  F. M. Ciaglia, A. Ibort, G. Marmo. Schwinger's Picture of Quantum Mechanics II: Algebras and observables.  \emph{Int. J. Geom. Meth. Modern Phys.}, \textbf{16}(9): 1950136  (2019).
%
\bibitem[Ci19c]{Ib18c}  F. M. Ciaglia, A. Ibort, G. Marmo. Schwinger's Picture of Quantum Mechanics III: The Statistical Interpretation. \emph{Int. J. Geom. Meth. Modern Phys.}, \textbf{16}(11): 1950165 (2019).  
%
\bibitem[Ci19]{Ci19}  F. M. Ciaglia, A. Ibort, G. Marmo. \textit{On the notion of composite systems}. LNCS, Volume 11712  (2019).
%
\bibitem[Ei35]{Ei35} A. Einstein, B. Podolsky, N. Rosen. \textit{Can Quantum-Mechanical Descripton of Physical Reality be Considered Complete?}  Phys. Rev., \textbf{47}, 777--780 (1935).
%
\bibitem[Fe05]{Fe05} R. P. Feynman.  \textit{Feynman's Thesis: A New Approach to Quantum Theory}.   Editor L. M. Brown, World Scientific (2005).  Reprinted from R. P. Feynman, \textit{The principle of least action in Quantum Mechanics} (University Microfilms, 1942). 
%
\bibitem[Ib18]{Ib18} A. Ibort, M.A. Rodr\'{\i}guez.  \textit{On the structure of finite groupoids and their representations}.   \textit{Symmetry}, \textbf{11}, 414 (2019).
%
\bibitem[Ib19]{Ib19} A. Ibort, M.A. Rodr\'{\i}guez.   \textit{An introduction to the theory of groups, groupoids and their representations}.
CRC (2109).
%
\bibitem[Ko50]{Ko50} A.N. Kolmogorov. \textit{Foundations of the theory of probability}. Chelsea Publ. Co., New York (1950).
%
\bibitem[La80]{La80} L.D. Landau, E. M. Lifshitz.  \textit{Statistical Physics}. Volume 5 of \textit{Course of Theoretical Physics}. Third edition, revised and enlarged by E.M. Lifshitz and L.P. Pitaevskii. Pergamon Press (1980).
%
\bibitem[Li00]{Li00} E.H. Lieb, J. Yngvason. \textit{The Mathematics of the Second Law of Thermodynamics}, in \textit{Visions in Mathematics, Towards 2000}, eds.  A. Alon, J. Borugain, A. Connes, M. Gromov and V. Milman. Geometric and Functional Analysis, Vol. –2000 (2000), pp. 334–358.
%
\bibitem[Li99]{Li99} E.H. Lieb, and Jakob Yngvason. \textit{The physics and mathematics of the second law of thermodynamics}. Physics Reports \textbf{310}(1),  1--96 (1999).
%
\bibitem[Ne32]{Ne32}  J. von Neumann. \textit{Mathematische Grundlagen der Quantenmechanik}, Springer (1932).
%%
\bibitem[Pe99]{Pe99} G.K. Pedersen. \textit{Pullback and Pushout Constructions in $C^*$-Algebra Theory}. J. Funct. Anal. \textbf{167}: 243-344 (1999). 
%
\bibitem[Po93]{Po93} S. Popa. \textit{Markov traces on universal Jones algebras and subfactors of finite index}. Invent. Math. \textbf{111}: 375-405 (1993).
%
\bibitem[Sc35]{Sc35} E. Schr\"odinger. \textit{Die gegenw\"{a}rtige Situation in der Quantenmechanick. I-III}. Naturwissenschaften, \textbf{23}, 807--12, 823--28, 844--49 (1935).
%
\bibitem[Se10]{Se10}  P. Selinger. \textit{A survey of graphical languages for monoidal categories}. In New Structures for Physics; Springer:
Berlin/Heidelberg, Germany, 2010; pp. 289?355
%
\bibitem[Sc91]{Sc91}  J. Schwinger.  \textit{Quantum Kinematics and Dynamics}.  Advanced Book Classics, Frontiers in Physics Series. Westview Press (Perseus Books Group) (1991).
%%%%%%%%%%%%%%%%%%%
\bibitem[Sc01]{Sc01} J.~Schwinger: \textit{Quantum Mechanics Symbolism of
Atomic measurements,} edited by  Berthold-Georg Englert. Springer--Verlag, Berlin (2001).
%
\bibitem[Vi01]{Vi01} Viola, L.; Knill, E.; Laflamme, R. \textit{Constructing qubits in physical systems}. J. Phys. A Math. Gen., \textbf{34}, 7067 (2001).
%
\bibitem[Vo85]{Vo85} D. Voiculescu. \textit{Symmetries of some reduced free product $C^*$-algebras}, Operator Algebras and their connections with topology and ergodic theory, Lecture Notes in Mathematics, vol. 1132, Springer Verlag (1985) pp. 556--588.
%
\bibitem[Vo92]{Vo92} D. Voiculescu, N. Dykema. \textit{Free Random Variables}.  AMS, CRM Monograph, vol 1 (1992).
%
\bibitem[Vo94]{Vo94} D. Voiculescu. \textit{Lectures on free probability}. Lecture Notes Mathematics, Vol. 1738 (Staint Flour XXVIII, 1994).
%
\bibitem[Za04]{Za04} P. Zanardi, D.A. Lidar, S. Lloyd. \textit{Quantum tensor product structures are observable induced}. Phys. Rev. Lett.,  \textbf{92}, 060402 (2004). 
%
\end{thebibliography}
\end{document}